\theoremstyle{general} \newtheorem{theorem}{Theorem}
\theoremstyle{general} 
\theoremstyle{general} 
\theoremstyle{general} 
\theoremstyle{general} 
\theoremstyle{general} 
\theoremstyle{remark} \newtheorem{remark}{Remark}
\newlength{\mycolwidth}
\begin{document}
	\title{Uplink and Downlink MIMO-NOMA with Simultaneous Triangularization}
	\author{\IEEEauthorblockN{Aravindh Krishnamoorthy\rlap{\textsuperscript{\IEEEauthorrefmark{2}\IEEEauthorrefmark{1}}}\,\,\,  and Robert Schober\rlap{\textsuperscript{\IEEEauthorrefmark{2}}}\\
	\IEEEauthorblockA{\small \IEEEauthorrefmark{2}Friedrich-Alexander-Universit\"{a}t Erlangen-N\"{u}rnberg, \IEEEauthorrefmark{1}Fraunhofer Institute for Integrated Circuits (IIS) Erlangen}}
	\thanks{This paper was presented in part at the IEEE Global Commun. Conf. (Globecom) 2019 \cite{Krishnamoorthy2019a} and accepted for presentation in part at the IEEE Wireless Commun. and Netw. Conf. (WCNC) 2021 \cite{Krishnamoorthy2020a}. Computer programs for the most important results in this paper can be downloaded from \protect\url{https://gitlab.com/aravindh.krishnamoorthy/mimo-noma}}}
	\maketitle
	
	\begin{abstract}
	In this paper, we consider the uplink and downlink precoder design for two-user power-domain multiple-input multiple-output (MIMO) non-orthogonal multiple access (NOMA) systems. We propose novel uplink and downlink precoding  and detection schemes that lower the decoding complexity at the receiver by decomposing the MIMO-NOMA channels of the users into multiple single-input single-output (SISO)-NOMA channels via simultaneous triangularization (ST) of the MIMO channels of the users and low-complexity self-interference cancellation at the receivers. The proposed ST MIMO-NOMA schemes avoid channel inversion at transmitter and receiver and take advantage of the null spaces of the MIMO channels of the users, which is beneficial for the ergodic achievable rate performance. We characterize the maximum ergodic achievable rate regions of the proposed uplink and downlink ST MIMO-NOMA schemes, and compare them with respective upper bounds, baseline MIMO-NOMA schemes, and orthogonal multiple access (OMA). Our results illustrate that the proposed schemes significantly outperform the considered baseline MIMO-NOMA schemes and OMA, and have a small gap to the respective upper bounds for most channel conditions and user rates. Moreover, we show that a hybrid scheme, which performs time sharing between the proposed uplink and downlink ST MIMO-NOMA and single-user MIMO, can improve performance even further.
	\end{abstract}
	
	\section{Introduction}
	Non-orthogonal multiple access (NOMA) has the potential to improve the spectral efficiency and data rate of 5th generation (5G) and beyond mobile communication systems \cite{Saito2013}. Specifically, power-domain NOMA, which utilizes superposition coding at the transmitter and successive interference cancellation (SIC) at the receiver, is of interest owing to its compatibility with the 4th generation (4G) communication systems. In the NOMA literature, research has mostly focused on single-input single-output (SISO)-NOMA so far \cite{Dai2018}, \cite{Ding2017}. However, recently, the extension of NOMA to multiple-input multiple-output (MIMO) systems has garnered interest owing to the potential performance gains compared to traditional MIMO orthogonal multiple access (OMA) schemes \cite{Zeng2017}. Unfortunately, capacity achieving MIMO-NOMA schemes are too complex for practical implementation \cite{Dai2018}. Therefore, several low-complexity MIMO-NOMA schemes have been proposed.
	
	For uplink transmission, a MIMO-NOMA scheme based on generalized singular value decomposition (GSVD) was proposed in \cite{Ma2016} for the special case where the numbers of antennas at the base station (BS) and the user equipments (UEs) are equal. Moreover, uplink MIMO-NOMA schemes based on zero forcing (ZF) and minimum mean squared error (MMSE) decoding were reported in \cite{Endo2012}. An iterative linear minimum mean-square estimation (LMMSE) based decoding scheme using parallel interference cancellation (PIC) for MIMO-NOMA was presented in \cite{Liu2019a}. Signal alignment (SA) based MIMO-NOMA, proposed in \cite{Ding2016}, uses multiple antennas at the users for uplink beamforming in order to avoid interference at the BS. SA reduces the decoding complexity at the BS by decomposing the MIMO-NOMA channel into multiple SISO-NOMA channels \cite{Saito2013}. However, SA necessitates multiple antennas at the users, which may not be feasible for all mobile user devices due to size constraints. Furthermore, despite the use of multiple antennas at the users, in SA based MIMO-NOMA, a user can only transmit a single spatial stream to the BS as the remaining degrees of freedom (DoFs) are used for interference cancellation. Hence, the development of uplink MIMO-NOMA schemes that allow each user to transmit multiple spatial streams, while maintaining a low decoding complexity, is of high practical interest.
		
	On the other hand, for downlink transmission, several precoding schemes for MIMO-NOMA have been reported \cite{Chen2016,Chen2016a,Ding2016,Ali2017,Zeng2017a,Chen2017a,Ding2016b,Choi2016,Rezaei2020,Xiao2019a,Tong2019,Morales-Cespedes2019}. Furthermore, power allocation for MIMO-NOMA systems was investigated in \cite{Wang2019,Xiao2019,Zhang2020,Panda2020}. Moreover, in order to reduce the decoding complexity at the users, precoder designs that simultaneously diagonalize the users' MIMO channels were reported in \cite{Krishnamoorthy2020},\cite{Chen2019}. Optimal power allocation for the precoder in \cite{Chen2019} was studied in \cite{Hanif2019}. The simultaneous diagonalization (SD)\footnote{SD decomposes the MIMO channel of the users into diagonal matrices utilizing a linear precoder and a per-user detection matrix.} based precoding and detection schemes in \cite{Krishnamoorthy2020},\cite{Chen2019} facilitate low-complexity decoding at the users by decomposing the MIMO-NOMA channels of the users into multiple parallel single-input single-output (SISO)-NOMA channels. Furthermore, these schemes exploit the available null spaces of the MIMO channels of the users for enabling inter-user-interference free communication, thereby enhancing the ergodic rate performance. However, the precoding schemes in \cite{Krishnamoorthy2020},\cite{Chen2019} achieve SD by inverting the MIMO channels of the users, which limits their performance.
	
	Hence, in this paper, we propose uplink and downlink MIMO-NOMA precoding and detection schemes based on simultaneous triangularization (ST) of the MIMO channels of the users, enabling low-complexity decoding at the receiver. Although the proposed precoding and detection schemes avoid inversion of the MIMO channels of the users, they can still take advantage of the null spaces of the MIMO channels of the users to achieve inter-user-interference free communication.
	
	This paper builds upon the conference versions in \cite{Krishnamoorthy2019a} and \cite{Krishnamoorthy2020a}. The uplink ST MIMO-NOMA scheme in \cite{Krishnamoorthy2019a} was limited to the case where the number of BS antennas is larger than the numbers of user antennas and was analyzed for equal power allocation. In this paper, we extend the scheme to all possible antenna configurations and optimal power allocation. On the other hand, suboptimal power allocation was considered for the downlink ST MIMO-NOMA scheme in \cite{Krishnamoorthy2020a} leading to a lower bound on the achievable rate region. In this paper, we determine the maximum  achievable rate region of downlink ST MIMO-NOMA based on a corresponding broadcast channel (BC) to multiple access channel (MAC) transformation analogous to that in \cite{Jindal2004}. Furthermore, we evaluate the performance of the proposed ST MIMO-NOMA schemes for a broader set of scenarios compared to \cite{Krishnamoorthy2019a} and \cite{Krishnamoorthy2020a}, respectively. The main contributions of this paper can be summarized as follows.
	
	\begin{itemize}
		\item Exploiting the QR decomposition, we develop uplink and downlink ST MIMO-NOMA precoding and detection schemes and corresponding low-complexity decoding schemes which decompose the downlink MIMO-NOMA channel into multiple parallel SISO-NOMA channels, assuming low-complexity self-interference cancellation at the users.
		\item For uplink transmission, we characterize the maximum achievable rate region based on convex optimization, and for downlink transmission, we exploit a BC-MAC transformation \cite{Jindal2004} and polyblock outer approximation \cite{Tuy2000} to obtain the corresponding maximum achievable rate region.
		\item Lastly, based on the obtained maximum achievable rate regions, we show that, for both uplink and downlink transmission, the proposed ST MIMO-NOMA schemes outperform baseline MIMO-NOMA schemes and MIMO-OMA for most channel conditions and user rates.
	\end{itemize}
	
	The remainder of this paper is organized as follows. We establish the considered uplink and downlink system models in Section \ref{sec:prelim}. In Section \ref{sec:propuplink}, we present the proposed uplink ST MIMO-NOMA precoding, detection, and decoding schemes and expressions for the corresponding achievable user rates. The proposed downlink ST MIMO-NOMA scheme is presented in Section \ref{sec:propdownlink}. In Section \ref{sec:opa}, we characterize the maximum achievable rate regions for uplink and downlink ST MIMO-NOMA. Simulation results are presented in Section \ref{sec:sim}, and conclusions are provided in Section \ref{sec:con}.
	
	\emph{Notation:} Boldface capital letters $\boldsymbol{X}$ and lower case letters $\boldsymbol{x}$ denote matrices and vectors, respectively. $\boldsymbol{X}^\mathrm{T}$, $\boldsymbol{X}^\mathrm{H}$,  $\boldsymbol{X}^+$, $\mathrm{tr}\left(\boldsymbol{X}\right)$, and $\mathrm{det}\left(\boldsymbol{X}\right)$ denote the transpose, Hermitian transpose, Moore-Penrose pseudoinverse, trace, and determinant of matrix $\boldsymbol{X}$, respectively. Furthermore, $\mathrm{col}\big(\boldsymbol{X}\big)$ and $\mathrm{null}\left(\boldsymbol{X}\right)$ denote the column space and null space of matrix $\boldsymbol{X},$ respectively. $\mathbb{C}^{m\times n}$ and $\mathbb{R}^{m\times n}$ denote the sets of all $m\times n$ matrices with complex-valued and real-valued entries, respectively. The $(i,j)$-th entry of matrix $\boldsymbol{X}$ is denoted by $[\boldsymbol{X}]_{ij}$ and the $i$-th entry of vector $\boldsymbol{x}$ is denoted by $[\boldsymbol{x}]_i.$ $\boldsymbol{I}_n$ denotes the $n\times n$ identity matrix, and $\boldsymbol{0}$ denotes the all zero matrix of appropriate dimension. The circularly symmetric complex Gaussian (CSCG) distribution with mean vector $\boldsymbol{\mu}$ and covariance matrix $\boldsymbol{\Sigma}$ is denoted by $\mathcal{CN}(\boldsymbol{\mu},\boldsymbol{\Sigma})$; $\sim$ stands for ``distributed as''. $\mathrm{E}[\cdot]$ denotes statistical expectation.
	
	\section{Preliminaries}
	\label{sec:prelim}
	In this section, we present the considered two-user power-domain MIMO-NOMA system model. We consider a communication system with a BS employing $N$ antennas and two users\footnote{For both uplink and downlink, we restrict the number of paired users to two for problem tractability. Extending the proposed ST MIMO-NOMA schemes to more than two users while retaining its desirable properties that enable inter-user-interference free communication seems very challenging and is beyond the scope of this paper. For $K>2$ users, a hybrid approach, such as in \cite[Section V-B]{Chen2019}, can be employed where the users are divided into groups of two users and each group is allocated orthogonal resources. Within each two-user group, the proposed ST MIMO-NOMA schemes can be applied.} whose UEs are equipped with $M_k, k=1,2,$ antennas. Furthermore, we assume that the first user is located farther away from the BS compared to the second user, thereby experiencing a higher path loss\footnote{Pairing users experiencing different channel conditions is crucial for exploiting the benefits of NOMA \cite{Saito2013}. The user labels `first' and `second' can be adjusted so that the user located farther away from the BS is always labeled as the `first' user.}.
	\subsection{Channel Model}
	\label{sec:cm}
	The downlink MIMO channel between the $k$-th user, $k=1,2,$ and the BS is modeled as
	\begin{equation}
	\frac{1}{\sqrt{\mathstrut \Pi_k}} \boldsymbol{H}_k,
	\end{equation}
	where the elements of matrix $\boldsymbol{H}_k \in \mathbb{C}^{M_k\times N}, k=1,2,$ model small-scale fading effects. Furthermore, $\Pi_k > 0, k=1,2,$ models the path loss between the $k$-th user and the BS, where $\Pi_1 > \Pi_2$. The uplink MIMO channel between the $k$-th user, $k=1,2,$ and the BS is modeled correspondingly as
	\begin{equation}
	\frac{1}{\sqrt{\mathstrut \Pi_k}} \boldsymbol{H}_k^\mathrm{H}.
	\end{equation}
	Furthermore, perfect knowledge of both MIMO channel matrices, $\boldsymbol{H}_1$ and $\boldsymbol{H}_2,$ is assumed at the BS, whereas perfect knowledge of their respective MIMO channel matrices is assumed at the users\footnote{Perfect channel knowledge is assumed to obtain performance upper bounds for the proposed ST MIMO-NOMA schemes. In practice, the MIMO channel matrices can be acquired at the BS and the users based on uplink and downlink pilots, respectively, analogous to the case of single-user MIMO systems, see e.g. \cite{Minn2006}.}.
	
	\subsection{Uplink System Model}
	\label{sec:usm}
	Let $L_k = \mathrm{min}\left\{M_k, N\right\}$ denote the symbol vector length of the $k$-th user, $k=1,2,$ and let $\boldsymbol{s}_1^{\mathrm{U}} = [s_{1,1}^{\mathrm{U}},\dots,s_{1,L_1}^{\mathrm{U}}]^\mathrm{T} \in \mathbb{C}^{L_1\times 1}$ and $\boldsymbol{s}_2^{\mathrm{U}} = [s_{2,1}^{\mathrm{U}},\dots,s_{2,L_2}^{\mathrm{U}}]^\mathrm{T} \in \mathbb{C}^{L_2\times 1}$ denote the transmit symbol vectors of the first and the second user, respectively. Here, we assume that the $s_{k,l}^{\mathrm{U}} \sim \mathcal{CN}(0,1),k=1,2,l=1,\dots,L_k,$ are i.i.d\footnote{In this work, we assume ideal Gaussian signaling for evaluation of the achievable user rates for ST MIMO-NOMA. Modulation and coding schemes that can closely approach the performance of ideal Gaussian signaling are known and can be utilized in practical implementations, see e.g. \cite{Liu2020}.}.
	
	User $k$ precodes its transmit symbol vector using a linear precoder matrix $\boldsymbol{P}_k^{\mathrm{U}} \in \mathbb{C}^{M_k\times L_k},k=1,2.$ Both users transmit their precoded symbol vectors to the BS over the same resource. The transmit power of user $k,$ denoted by $P_k^{\mathrm{U}},k=1,2,$ is given by
	\begin{align}
		P_k^{\mathrm{U}} = \mathrm{tr}\left(\boldsymbol{P}_k^{\mathrm{U}} (\boldsymbol{P}_k^{\mathrm{U}})^\mathrm{H}\right).
	\end{align}
	The received signal at the BS, $\boldsymbol{y}'^{\mathrm{U}} \in \mathbb{C}^{N\times 1},$ is given by
	\begin{equation}
		\boldsymbol{y}'^{\mathrm{U}} = \frac{1}{\sqrt{\mathstrut \Pi_1}} \boldsymbol{H}_1^\mathrm{H} \boldsymbol{P}_1^{\mathrm{U}} \boldsymbol{s}_1^{\mathrm{U}} + \frac{1}{\sqrt{\mathstrut \Pi_2}}\boldsymbol{H}_2^\mathrm{H} \boldsymbol{P}_2^{\mathrm{U}} \boldsymbol{s}_2 ^{\mathrm{U}}+ \boldsymbol{n}'^{\mathrm{U}},
	\end{equation}
	where $\boldsymbol{n}'^{\mathrm{U}} \sim \mathcal{CN}(\boldsymbol{0}, \sigma^2\boldsymbol{I}_N)$ denotes the additive white Gaussian noise (AWGN) vector at the BS. Prior to decoding, the received signal at the BS is processed using a unitary detection matrix $\boldsymbol{Q}^{\mathrm{U}} \in \mathcal{C}^{N\times N}$ to obtain
	\begin{equation}
		\boldsymbol{y}^{\mathrm{U}} = \boldsymbol{Q}^{\mathrm{U}}\boldsymbol{y}'^{\mathrm{U}} = \frac{1}{\sqrt{\mathstrut \Pi_1}}\boldsymbol{Q}^{\mathrm{U}}\boldsymbol{H}_1^\mathrm{H} \boldsymbol{P}_1^{\mathrm{U}} \boldsymbol{s}_1^{\mathrm{U}} + \frac{1}{\sqrt{\mathstrut \Pi_2}}\boldsymbol{Q}^{\mathrm{U}}\boldsymbol{H}_2^\mathrm{H} \boldsymbol{P}_2^{\mathrm{U}} \boldsymbol{s}_2^{\mathrm{U}} + \boldsymbol{n}^{\mathrm{U}}, \label{eqn:y}
	\end{equation}
	where $\boldsymbol{n}^{\mathrm{U}} = \boldsymbol{Q}^{\mathrm{U}}\boldsymbol{n}'^{\mathrm{U}} \sim \mathcal{CN}(\boldsymbol{0}, \sigma^2\boldsymbol{I}_N).$ $\boldsymbol{y}^{\mathrm{U}}$ is subsequently used for decoding.
	
	\subsection{Downlink System Model}
	\label{sec:dsm}
	Let $L = \mathrm{min}\left\{M_1+M_2, N\right\}$ denote the symbol vector length, and let $\boldsymbol{s}_1^{\mathrm{D}} = [s_{1,1}^{\mathrm{D}},\dots,s_{1,L}^{\mathrm{D}}]^\mathrm{T} \in \mathbb{C}^{L\times 1}$ and $\boldsymbol{s}_2^{\mathrm{D}} = [s_{2,1}^{\mathrm{D}},\dots,s_{2,L}^{\mathrm{D}}]^\mathrm{T} \in \mathbb{C}^{L\times 1}$ denote the symbol vectors intended for the first and the second users, respectively. We assume that the $s_{k,l}^{\mathrm{D}} \sim \mathcal{CN}(0,1), k=1,2, l=1,\dots,L,$ are i.i.d. for all $k,l,$ as in the uplink case. We construct the downlink MIMO-NOMA symbol vector $\boldsymbol{s}^{\mathrm{D}} = [s_1^{\mathrm{D}}, \dots, s_L^{\mathrm{D}}]^\mathrm{T}$ as follows
	\begin{align}
	\boldsymbol{s}^{\mathrm{D}} = \mathrm{diag}\left(\sqrt{p_{1,1}^{\mathrm{D}}},\dots,\sqrt{p_{1,L}^{\mathrm{D}}}\right)\boldsymbol{s}_1^{\mathrm{D}} + \mathrm{diag}\left(\sqrt{p_{2,1}^{\mathrm{D}}},\dots,\sqrt{p_{2,L}^{\mathrm{D}}}\right)\boldsymbol{s}_2^{\mathrm{D}}, \label{eqn:s}
	\end{align}
	where $p_{k,l}^{\mathrm{D}} \geq 0, k=1,2, l=1,\dots,L,$ is the transmit power allocated to the $l$-th symbol of user $k.$ The MIMO-NOMA symbol vector is precoded using a linear precoder matrix $\boldsymbol{P}^{\mathrm{D}} \in \mathbb{C}^{N\times L}$ resulting in transmit signal $\boldsymbol{x}^{\mathrm{D}} = \boldsymbol{P}^{\mathrm{D}} \boldsymbol{s}^{\mathrm{D}}.$ The corresponding transmit power is given by
	\begin{align}
		P_\mathrm{T}^{\mathrm{D}} = \mathrm{tr}\left(\boldsymbol{P}^{\mathrm{D}} \mathrm{diag}\left(p_{1,1}^{\mathrm{D}} + p_{2,1}^{\mathrm{D}}, \dots, p_{1,L}^{\mathrm{D}} + p_{2,L}^{\mathrm{D}}\right) (\boldsymbol{P}^{\mathrm{D}})^\mathrm{H}\right). \label{eqn:eppleq1}
	\end{align}
	
	At user $k,$ $k=1,2,$ the received signal, $\boldsymbol{y}_k'^{\mathrm{D}} \in \mathbb{C}^{M_k\times 1},$ is given by
	\begin{align}
	\boldsymbol{y}_k'^{\mathrm{D}} &= \frac{1}{\sqrt{\Pi_k}}\boldsymbol{H}_k \boldsymbol{x}^{\mathrm{D}} + \boldsymbol{n}_k'^{\mathrm{D}} = \frac{1}{\sqrt{\Pi_k}} \boldsymbol{H}_k \boldsymbol{P}^{\mathrm{D}} \boldsymbol{s}^{\mathrm{D}} + \boldsymbol{n}_k'^{\mathrm{D}},
	\end{align} 
	where $\boldsymbol{n}_k'^{\mathrm{D}} \sim \mathcal{CN}(\boldsymbol{0},\sigma^2 \boldsymbol{I}_{M_k})$ denotes the AWGN vector at user $k.$ Furthermore, at user $k,$ signal $\boldsymbol{y}_k'^{\mathrm{D}}$ is processed by a unitary detection matrix $\boldsymbol{Q}_k^{\mathrm{D}} \in \mathbb{C}^{M_k\times M_k}$ leading to
	\begin{equation}
	\boldsymbol{y}_k^{\mathrm{D}} = \boldsymbol{Q}_k^{\mathrm{D}} \boldsymbol{y}_k'^{\mathrm{D}} = \frac{1}{\sqrt{\Pi_k}} \boldsymbol{Q}_k^{\mathrm{D}}\boldsymbol{H}_k \boldsymbol{P}^{\mathrm{D}} \boldsymbol{s}^{\mathrm{D}} + \boldsymbol{n}_k^{\mathrm{D}}, \label{eqn:yk}
	\end{equation}
	where $\boldsymbol{n}_k^{\mathrm{D}} = \boldsymbol{Q}_k^{\mathrm{D}}\boldsymbol{n}_k'^{\mathrm{D}} \sim \mathcal{CN}(\boldsymbol{0},\sigma^2 \boldsymbol{I}_{M_k}).$ $\boldsymbol{y}_k^{\mathrm{D}}$ is subsequently used for decoding.
	
	\section{Proposed Uplink ST MIMO-NOMA Scheme}
	\label{sec:propuplink}
	In this section, first, we develop a matrix decomposition for ST of the uplink MIMO channels of the users. Next, we exploit this matrix decomposition to design the proposed uplink ST MIMO-NOMA precoding and decoding schemes, and provide the corresponding achievable rate expressions.
	
	\subsection{Simultaneous Triangularization for Uplink Transmission}
	\begin{theorem}
		\label{th:stu}
		Let $\boldsymbol{A}_1 \in \mathbb{C}^{p\times q_1}$ and $\boldsymbol{A}_2 \in \mathbb{C}^{p\times q_2}$ denote matrices with an equal number of rows and full row or column rank. Then, there exist unitary matrices $\boldsymbol{U} \in \mathbb{C}^{p\times p},$ $\boldsymbol{V}_1 \in \mathbb{C}^{q_1\times \mathrm{min}\left\{p,q_1\right\}},$ and $\boldsymbol{V}_2 \in \mathbb{C}^{q_2\times \mathrm{min}\left\{p,q_2\right\}}$ such that
		\begin{align}
			\boldsymbol{U}\boldsymbol{A}_1\boldsymbol{V}_1 &= \begin{bmatrix}\check{\boldsymbol{R}}_1 \\ \boldsymbol{0}\end{bmatrix}, \label{eqn:stu1}\\
			\boldsymbol{U}\boldsymbol{A}_2\boldsymbol{V}_2 &= \begin{bmatrix}\check{\boldsymbol{X}} \\ \check{\boldsymbol{R}}_2\end{bmatrix}, \label{eqn:stu2}
		\end{align}
		where $\check{\boldsymbol{R}}_1 \in \mathbb{C}^{\mathrm{min}\left\{p,q_1\right\}\times \mathrm{min}\left\{p,q_1\right\}}$ and $\check{\boldsymbol{R}}_2 \in \mathbb{C}^{\mathrm{min}\left\{p,q_2\right\}\times \mathrm{min}\left\{p,q_2\right\}}$ are upper-triangular matrices with real-valued entries along the main diagonals, and $\check{\boldsymbol{X}} \in \mathbb{C}^{(p-\mathrm{min}\left\{p,q_2\right\})\times \mathrm{min}\left\{p,q_2\right\}}$ is a full matrix. 
	\end{theorem}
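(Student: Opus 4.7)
Plan:

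My plan is to construct $\boldsymbol{U}, \boldsymbol{V}_1, \boldsymbol{V}_2$ by handling $\boldsymbol{A}_1$ first and then $\boldsymbol{A}_2$, using the key observation that the structural requirement on $\boldsymbol{U}\boldsymbol{A}_2\boldsymbol{V}_2$ restricts only its bottom $r_2$ rows (the top $p - r_2$ rows, forming $\check{\boldsymbol{X}}$, are free). This asymmetry means $\boldsymbol{U}$ can be chosen purely to accommodate $\boldsymbol{A}_1$, and $\boldsymbol{V}_2$ adjusted afterwards to triangularize the bottom part of $\boldsymbol{U}\boldsymbol{A}_2$. Let $r_k = \mathrm{min}\{p, q_k\}$ for $k = 1, 2$.

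In the first step, I would apply a QR-type decomposition to $\boldsymbol{A}_1$. When $q_1 \leq p$ ($\boldsymbol{A}_1$ has full column rank $r_1 = q_1$), ordinary QR gives a unitary $\boldsymbol{U}_0 \in \mathbb{C}^{p \times p}$ and upper triangular $\check{\boldsymbol{R}}_1 \in \mathbb{C}^{q_1 \times q_1}$ with real diagonal such that $\boldsymbol{U}_0 \boldsymbol{A}_1$ has $\check{\boldsymbol{R}}_1$ on top and a zero block below; I then set $\boldsymbol{V}_1 = \boldsymbol{I}_{q_1}$. When $q_1 > p$ ($\boldsymbol{A}_1$ has full row rank $r_1 = p$), I choose $\boldsymbol{V}_1 \in \mathbb{C}^{q_1 \times p}$ to be an orthonormal basis of the row space of $\boldsymbol{A}_1$, so that $\boldsymbol{A}_1 \boldsymbol{V}_1 \in \mathbb{C}^{p \times p}$ is invertible, and apply QR to get $\boldsymbol{U}_0 \boldsymbol{A}_1 \boldsymbol{V}_1 = \check{\boldsymbol{R}}_1$ (no zero block, since $r_1 = p$). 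In both cases, (\ref{eqn:stu1}) is realized by the triple $(\boldsymbol{U}_0, \boldsymbol{V}_1, \check{\boldsymbol{R}}_1)$.

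In the second step, I set $\boldsymbol{U} = \boldsymbol{U}_0$ and construct $\boldsymbol{V}_2$ so that $\boldsymbol{U}\boldsymbol{A}_2\boldsymbol{V}_2$ has the required block form. Denote by $\boldsymbol{U}_\mathrm{bot}$ the last $r_2$ rows of $\boldsymbol{U}$, and set $\boldsymbol{M} = \boldsymbol{U}_\mathrm{bot}\boldsymbol{A}_2 \in \mathbb{C}^{r_2 \times q_2}$; the requirement reduces to finding $\boldsymbol{V}_2 \in \mathbb{C}^{q_2 \times r_2}$ with orthonormal columns such that $\boldsymbol{M}\boldsymbol{V}_2$ is upper triangular with real diagonal. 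Since $r_2 \leq q_2$, I apply a QL decomposition to $\boldsymbol{M}^\mathrm{H} \in \mathbb{C}^{q_2 \times r_2}$, obtaining a unitary $\boldsymbol{Q}_N \in \mathbb{C}^{q_2 \times q_2}$ and a lower triangular $\boldsymbol{L} \in \mathbb{C}^{r_2 \times r_2}$ with real diagonal such that $\boldsymbol{M}^\mathrm{H}$ equals $\boldsymbol{Q}_N$ multiplied by the block column whose top block is zero and bottom block is $\boldsymbol{L}$. Taking $\boldsymbol{V}_2$ to be the last $r_2$ columns of $\boldsymbol{Q}_N$, direct computation gives $\boldsymbol{M}\boldsymbol{V}_2 = \boldsymbol{L}^\mathrm{H}$, which is upper triangular with real diagonal and serves as $\check{\boldsymbol{R}}_2$; the top $p - r_2$ rows of $\boldsymbol{U}\boldsymbol{A}_2\boldsymbol{V}_2$ are unconstrained and form $\check{\boldsymbol{X}}$, establishing (\ref{eqn:stu2}).

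The main point requiring care is that the QL decomposition is well-defined and yields an $\boldsymbol{L}$ with real diagonal regardless of whether $\boldsymbol{M}$ has full rank. When $\boldsymbol{M}$ is rank-deficient, which can happen if $\mathrm{col}(\boldsymbol{A}_2)$ has nontrivial intersection with the orthogonal complement of the row space of $\boldsymbol{U}_\mathrm{bot}$, some diagonal entries of $\boldsymbol{L}$ and hence of $\check{\boldsymbol{R}}_2$ may vanish; this is consistent with the theorem, which only requires the diagonal entries of $\check{\boldsymbol{R}}_2$ to be real, not strictly nonzero. The full row or column rank hypothesis on the two matrices guarantees that all intermediate decompositions used are well-posed and that $\check{\boldsymbol{R}}_1$ in particular has a nonzero real diagonal.
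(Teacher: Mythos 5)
Your proposal is correct and follows essentially the same route as the paper: first triangularize $\boldsymbol{A}_1$ via a QR/QL step to fix $\boldsymbol{U}$ and $\boldsymbol{V}_1$, then exploit the fact that only the bottom $\mathrm{min}\{p,q_2\}$ rows of $\boldsymbol{U}\boldsymbol{A}_2\boldsymbol{V}_2$ are constrained and choose $\boldsymbol{V}_2$ from a QL decomposition of the Hermitian of (the relevant rows of) $\boldsymbol{U}\boldsymbol{A}_2$. The only cosmetic differences are that in the wide case ($q_1>p$) the paper takes $\boldsymbol{V}_1$ directly from a QL of $\boldsymbol{A}_1^\mathrm{H}$ and sets $\boldsymbol{U}=\boldsymbol{I}_p$, whereas you build $\boldsymbol{V}_1$ from a row-space basis followed by a QR, and that the paper applies the second QL to the full matrix $(\boldsymbol{U}\boldsymbol{A}_2)^\mathrm{H}$ rather than only its last $\mathrm{min}\{p,q_2\}$ columns.
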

	\begin{proof}
		Please refer Appendix \ref{app:stu}.
	\end{proof}
	
	In the following, we denote the simultaneous triangularization operation in Theorem \ref{th:stu} as
	\begin{equation}
	\triangle(\boldsymbol{A}_1, \boldsymbol{A}_2) \rightarrow (\boldsymbol{U}, \boldsymbol{V}_1, \boldsymbol{V}_2, \check{\boldsymbol{R}}_1, \check{\boldsymbol{R}}_2,\check{\boldsymbol{X}}).
	\end{equation}

	
	\subsection{Uplink Decoding Order}
	\label{sec:udo}
	From (\ref{eqn:y}), we observe that the user signals are superimposed at the BS. Hence, as described in detail later, successive interference cancellation (SIC) is employed at the BS. The achievable rates of the users depend on the decoding order. For decoding order (D-2-1), the symbols of the second user are decoded first, followed by SIC, and then the symbols of the first user are decoded. Similarly, (D-1-2) implies that the symbols of the first user are decoded first. Furthermore, for decoding orders (D-2-1) and (D-1-2), simultaneous triangularization operations $\triangle(\boldsymbol{H}_1^\mathrm{H}, \boldsymbol{H}_2^\mathrm{H})$ and $\triangle(\boldsymbol{H}_2^\mathrm{H}, \boldsymbol{H}_1^\mathrm{H})$ are utilized, respectively.

	\subsection{Proposed Uplink Pre-processing Scheme for (D-2-1)}
	\label{sec:uprecoding}
	As explained above, for (D-2-1), the second user's symbols are decoded first. Let $\bar{M}_1 = \mathrm{max}\left\{0,\mathrm{min}\left\{L_1,N-M_2\right\}\right\}, \bar{M}_2 = \mathrm{max}\left\{0,\mathrm{min}\left\{L_2,N-M_1\right\}\right\},$ and $M = N-\bar{M}_1-\bar{M}_2.$ Based on Theorem \ref{th:stu}, the precoding and detection matrices are chosen as follows\footnote{In practice, the precoding matrices can be computed at the BS based on uplink pilots and subsequently forwarded to the users.}.
	\begin{align}
		\triangle(\boldsymbol{H}_1^\mathrm{H},\boldsymbol{H}_2^\mathrm{H}) &\rightarrow (\boldsymbol{Q}^{\mathrm{U}},\boldsymbol{P}_1'^{\mathrm{U}}, \boldsymbol{P}_2'^{\mathrm{U}},\boldsymbol{R}_1^{\mathrm{U}},\boldsymbol{R}_2^{\mathrm{U}},\boldsymbol{X}^{\mathrm{U}}), \\
		\boldsymbol{P}_1^{\mathrm{U}} &= \boldsymbol{P}_1'^{\mathrm{U}}\underbrace{\mathrm{diag}\left(\sqrt{p_{1,1}^{\mathrm{U}}},\dots,\sqrt{p_{1,L_1}^{\mathrm{U}}}\right)}_{\coloneqq \boldsymbol{D}_1^{\mathrm{U}}}, \\
		\boldsymbol{P}_2^{\mathrm{U}} &= \boldsymbol{P}_2'^{\mathrm{U}}\underbrace{\mathrm{diag}\left(\sqrt{p_{2,1}^{\mathrm{U}}},\dots,\sqrt{p_{2,L_2}^{\mathrm{U}}}\right)}_{\coloneqq \boldsymbol{D}_2^{\mathrm{U}}},
	\end{align}
	where $\boldsymbol{Q}^{\mathrm{U}} \in \mathbb{C}^{N\times N}$ is unitary, $\boldsymbol{R}_1^{\mathrm{U}} \in \mathbb{C}^{L_1\times L_1}$ and $\boldsymbol{R}_2^{\mathrm{U}} \in \mathbb{C}^{L_2\times L_2}$ are upper-triangular matrices with real-valued entries along the main diagonals, and $\boldsymbol{X}^{\mathrm{U}} \in \mathbb{C}^{\bar{M}_1\times L_2}$ is a full matrix, and $p_{k,l}^{\mathrm{U}}, k=1,2,l=1,\dots,L_k,$ are power allocation coefficients such that
	\begin{align}
		\sum_{l=1}^{L_k} p_{k,l}^{\mathrm{U}} = P_k^{\mathrm{U}},
	\end{align}
	for $k=1,2.$

	Substituting the proposed precoding and detection matrices in (\ref{eqn:y}), the received signal at the BS is given by
	\begin{align}
		\boldsymbol{y}^{\mathrm{U}} &= \frac{1}{\sqrt{\Pi_1}}\begin{bmatrix}\boldsymbol{R}_1^{\mathrm{U}}\\ \boldsymbol{0}\end{bmatrix} \boldsymbol{D}_1^{\mathrm{U}}\boldsymbol{s}_1^{\mathrm{U}} + \frac{1}{\sqrt{\Pi_2}}\begin{bmatrix} \boldsymbol{X}^{\mathrm{U}} \\ \boldsymbol{R}_2^{\mathrm{U}}\end{bmatrix} \boldsymbol{D}_2^{\mathrm{U}}\boldsymbol{s}_2^{\mathrm{U}} + \boldsymbol{n}^{\mathrm{U}} \nonumber\\
		&=\frac{1}{\sqrt{\Pi_1}}\scalebox{0.7}{\mbox{\ensuremath{\displaystyle \left[\begin{array}{c}
		\underbrace{\begin{array}{>{$}w{c}{\mycolwidth}<{$}>{$}w{c}{\mycolwidth}<{$}>{$}w{c}{\mycolwidth}<{$}>{$}w{c}{\mycolwidth}<{$}}
		\rho^{(1)}_{1,1} & \rho^{(1)}_{1,2} & \cdots & \rho^{(1)}_{1,L_1} \\
		& \rho^{(1)}_{2,2} &  \rho^{(1)}_{2,3} & \vdots \\
		& & \ddots & \vdots \\
		& & & \rho^{(1)}_{L_1,L_1}		\end{array}}_{\boldsymbol{R}_1^{\mathrm{U}}}\\
		\begin{array}{c}
		\\
		\boldsymbol{0}\\
		\\
		\end{array}		
		\end{array}\right]}}} \boldsymbol{D}_1^{\mathrm{U}}\boldsymbol{s}_1^{\mathrm{U}} 
		+ \frac{1}{\sqrt{\Pi_2}}\scalebox{0.7}{\mbox{\ensuremath{\displaystyle \left[\begin{array}{c}
			\underbrace{\begin{array}{>{$}w{c}{\mycolwidth}<{$}>{$}w{c}{\mycolwidth}<{$}>{$}w{c}{\mycolwidth}<{$}>{$}w{c}{\mycolwidth}<{$}}
					x_{1,1}^{\mathrm{U}} & x_{1,2}^{\mathrm{U}} & \cdots & x_{1,L_2}^{\mathrm{U}} \\
					\vdots & \vdots & \ddots & \vdots \\
					x_{\bar{M}_1,1}^{\mathrm{U}} & x_{\bar{M}_1,2}^{\mathrm{U}} & \cdots & x_{\bar{M}_1,L_2}^{\mathrm{U}}\end{array}}_{\boldsymbol{X^{\mathrm{U}}}} \\
			\underbrace{\begin{array}{>{$}w{c}{\mycolwidth}<{$}>{$}w{c}{\mycolwidth}<{$}>{$}w{c}{\mycolwidth}<{$}>{$}w{c}{\mycolwidth}<{$}}
					\rho^{(2)}_{1,1} & \rho^{(2)}_{1,2} & \cdots & \rho^{(2)}_{1,L_2} \\
					& \rho^{(2)}_{2,2} &  \rho^{(2)}_{2,3} & \vdots \\
					& & \ddots & \vdots \\
					& & & \rho^{(2)}_{L_2,L_2}\end{array}}_{\boldsymbol{R}_2^{\mathrm{U}}}
	\end{array}\right]}}} \boldsymbol{D}_2^{\mathrm{U}}\boldsymbol{s}_2^{\mathrm{U}} + \boldsymbol{n}^{\mathrm{U}}. \label{eqn:yp}
	\end{align}
	From (\ref{eqn:yp}), we note that the last $\bar{M}_2$ elements of $\boldsymbol{y}^{\mathrm{U}}$ depend only on the transmit symbols of the second user. Furthermore, the last $M$ rows of $\boldsymbol{R}_1^{\mathrm{U}}$ and the first $M$ rows of $\boldsymbol{R}_2^{\mathrm{U}}$ overlap resulting in inter-user-interference for $[\boldsymbol{y}^{\mathrm{U}}]_l, l=\bar{M}_1+1,\dots,\bar{M}_1+M,$ as shown in Figure \ref{fig:stu}. Rewriting (\ref{eqn:yp}) as scalar equations, we have
	\begin{align}
	[\boldsymbol{y}^{\mathrm{U}}]_{l} &= \frac{1}{\sqrt{\Pi_2}}\rho^{(2)}_{l-\bar{M}_1,l-\bar{M}_1} \sqrt{p_{2,l-\bar{M}_1}^{\mathrm{U}}} s_{2,l-\bar{M}_1}^{\mathrm{U}} + \frac{1}{\sqrt{\Pi_2}}\sum_{l' = l-\bar{M}_1+1}^{L_2} \rho^{(2)}_{l-\bar{M}_1,l'} \sqrt{p_{2,l'}^{\mathrm{U}}} s_{2,l'}^{\mathrm{U}} + [\boldsymbol{n}^{\mathrm{U}}]_l, \label{eqn:y3}
	\end{align}
	for $l = \bar{M}_1+M+1,\dots,N,$
	\begin{align}
	[\boldsymbol{y}^{\mathrm{U}}]_{l} &= \frac{1}{\sqrt{\Pi_1}}\rho^{(1)}_{l,l} \sqrt{p_{1,l}^{\mathrm{U}}} s_{1,l}^{\mathrm{U}} + \frac{1}{\sqrt{\Pi_2}}\rho^{(2)}_{l-\bar{M}_1,l-\bar{M}_1} \sqrt{p_{2,l-\bar{M}_1}^{\mathrm{U}}} s_{2,l-\bar{M}_1}^{\mathrm{U}} + \frac{1}{\sqrt{\Pi_1}}\sum_{l'=l+1}^{L_1} \rho^{(1)}_{l,l'} \sqrt{p_{1,l'}^{\mathrm{U}}} s_{1,l'}^{\mathrm{U}}  \nonumber\\&\qquad+  \frac{1}{\sqrt{\Pi_2}}\sum_{l' = l-\bar{M}_1+1}^{L_2}\rho^{(2)}_{l-\bar{M}_1,l'} \sqrt{p_{2,l'}^{\mathrm{U}}} s_{2,l'}^{\mathrm{U}} + [\boldsymbol{n}^{\mathrm{U}}]_l, \label{eqn:y2}
	\end{align}
	for $l = \bar{M}_1+1,\dots,\bar{M}_1+M,$ and
	\begin{align}
	[\boldsymbol{y}^{\mathrm{U}}]_{l} &= \frac{1}{\sqrt{\Pi_1}}\rho^{(1)}_{l,l} \sqrt{p_{1,l}^{\mathrm{U}}} s_{1,l}^{\mathrm{U}} + \frac{1}{\sqrt{\Pi_1}}\sum_{l'=l+1}^{L_1} \rho^{(1)}_{l,l'} \sqrt{p_{1,l'}^{\mathrm{U}}} s_{1,l'}^{\mathrm{U}} + \frac{1}{\sqrt{\Pi_2}}\sum_{l' = 1}^{L_2} x_{l,l'}^{\mathrm{U}} \sqrt{p_{2,l'}^{\mathrm{U}}} s_{2,l'}^{\mathrm{U}} + [\boldsymbol{n}^{\mathrm{U}}]_l,\label{eqn:y1}
	\end{align}
	for $l = 1,\dots,\bar{M}_1.$
	
	Lastly, the rates of $s_{1,l}^{\mathrm{U}}, l=1,\dots,L_1,$ and $s_{2,l}^{\mathrm{U}}, l=1,\dots,L_2,$ are chosen such that they lie within the achievable rate region, provided later in Section \ref{sec:urates}.
	
	\begin{figure*}
		\centering
		\begin{minipage}{0.5\textwidth}
			\centering
			\includegraphics[width=\textwidth]{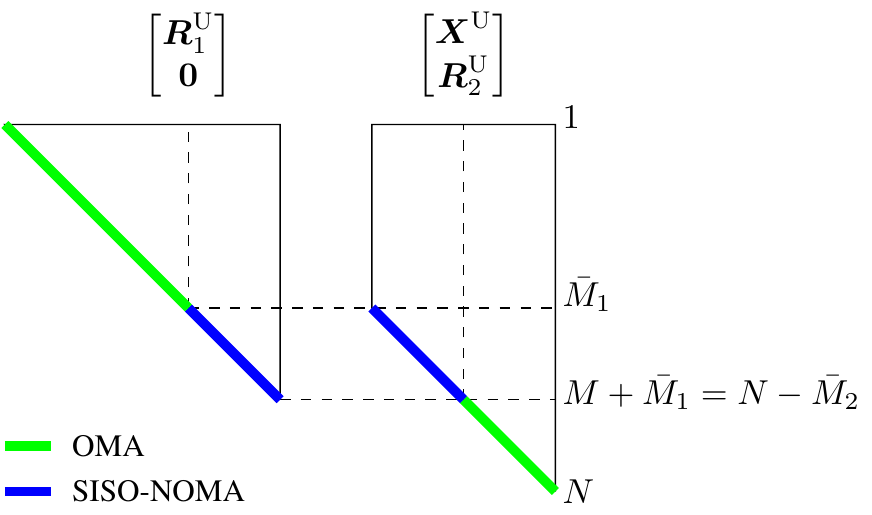}
			\caption{Simultaneous triangularization in uplink ST MIMO-NOMA for (D-2-1) decoding and $\bar{M}_1,\bar{M}_2,M > 0.$ The green and blue diagonal portions are decoded successively using conventional OMA and SISO-NOMA decoding, respectively.}
			\label{fig:stu}
		\end{minipage}\hspace{10pt}%
	\begin{minipage}{0.45\textwidth}
		\centering
		\captionof{table}{Complexity of precoder and detection matrix computation for the proposed uplink ST MIMO-NOMA and the considered MIMO-NOMA baseline schemes.}
		\label{tab:u}
		\begin{tabular}{|p{0.55\textwidth}|p{0.3\textwidth}|}
			\hline
			MIMO-NOMA Scheme & Complexity	 \\\hline\hline
			Proposed ST & $\mathcal{O}\mkern-\medmuskip
\left(4 N^3\right)$ \\\hline
			ZF and MMSE & $\mathcal{O}\mkern-\medmuskip
\left(5 N^3\right)$ \\\hline
			SVD & $\mathcal{O}\mkern-\medmuskip
\left(18 N^3\right)$ \\\hline
			GSVD & $\mathcal{O}\mkern-\medmuskip
\left(38.66 N^3\right)$ \\\hline
		\end{tabular}
	\end{minipage}
	\end{figure*}
	
	\subsection{Proposed Uplink Decoding Scheme for (D-2-1)}
	We observe that the elements of $\boldsymbol{y}^{\mathrm{U}},$ except for the last element, are subject to self-interference due to the triangularized channels. Furthermore, the first $M+\bar{M}_1$ elements are additionally subject to inter-user-interference due to the adopted superposition coding.
	
	Let $\hat{s}_{1,l}^{\mathrm{U}}, l=1,\dots,L_1,$ and $\hat{s}_{2,l}^{\mathrm{U}}, l=1,\dots,L_2,$ denote the detected symbols corresponding to transmitted symbols $s_{1,l}^{\mathrm{U}}$ and $s_{2,l}^{\mathrm{U}},$ respectively. As the rates of $s_{1,l}^{\mathrm{U}}, l=1,\dots,L_1,$ and $s_{2,l}^{\mathrm{U}}, l=1,\dots,L_2,$ are chosen such that they lie within the achievable rate region, $\hat{s}_{1,l}^{\mathrm{U}} = s_{1,l}^{\mathrm{U}}$ and $\hat{s}_{2,l}^{\mathrm{U}} = s_{2,l}^{\mathrm{U}}$ is assumed in the following. We adopt a decoding strategy where we decode element by element, in reverse order, beginning with the last element of $\boldsymbol{y}^{\mathrm{U}}, [\boldsymbol{y}^{\mathrm{U}}]_N,$ which is self-interference free.  For each element, we remove the self-interference of the previously decoded symbols. If $M > 0,$ for the middle $M$ elements, in addition to self-interference cancellation, we perform SIC to remove the inter-user-interference as in SISO-NOMA \cite{Saito2013}. The detailed decoding steps are as follows.
		
	\begin{enumerate}
		\item If $\bar{M}_2 > 0,$ as $[\boldsymbol{y}^{\mathrm{U}}]_{N}$ is both self- and inter-user-interference free, the corresponding symbol $s_{2,L_2}^{\mathrm{U}}$ is decoded directly.
		
		\item Next, if $\bar{M}_2 > 0,$ for $[\boldsymbol{y}^{\mathrm{U}}]_{l}, l=\bar{M}_1+M+1,\dots,N-1$ (in reverse order), assuming perfect decoding\footnote{The assumption of perfect decoding is justified as the rates corresponding to the decoded symbols are at or below the achievable rate $R_{k,l}$ provided in Section \ref{sec:urates}. In a practical implementation with practical modulation and coding schemes, decoding errors are unavoidable and may lead to error propagation. The impact of error propagation can be minimized by using powerful channel codes such as low density parity check (LDPC) codes.}, the self-interference from the previously decoded symbols of the second user is canceled to obtain a self-interference free signal
		\begin{align}
		[\hat{\boldsymbol{y}}^{\mathrm{U}}]_{l} &= [\boldsymbol{y}^{\mathrm{U}}]_{l} - \frac{1}{\sqrt{\Pi_2}}\sum_{l' = l-\bar{M}_1+1}^{L_2} \rho^{(2)}_{l-\bar{M}_1,l'} \sqrt{p_{2,l'}^{\mathrm{U}}} \hat{s}_{2,l'}^{\mathrm{U}} \nonumber\\&= \frac{1}{\sqrt{\Pi_2}}\rho^{(2)}_{l-\bar{M}_1,l-\bar{M}_1} \sqrt{p_{2,l-\bar{M}_1}^{\mathrm{U}}} s_{2,l-\bar{M}_1}^{\mathrm{U}}  + [\boldsymbol{n}^{\mathrm{U}}]_l, \label{eqn:yh3}
		\end{align}
		which is used for decoding $s_{2,M+1}^{\mathrm{U}},\dots,s_{2,L_2-1}^{\mathrm{U}},$ in reverse order.
		
		\item Next, if $M > 0,$ for $[\boldsymbol{y}^{\mathrm{U}}]_{l}, l = \bar{M}_1+1,\dots,\bar{M}_1+M$ (in reverse order), first the self-interference from the previously decoded symbols is canceled. The resulting signal
		\begin{align}
		[\hat{\boldsymbol{y}}^{\mathrm{U}}]_{l} &= [\boldsymbol{y}^{\mathrm{U}}]_{l} - \frac{1}{\sqrt{\Pi_1}}\sum_{l'=l+1}^{L_1} \rho^{(1)}_{l,l'} \sqrt{p_{1,l'}^{\mathrm{U}}} \hat{s}_{1,l'}^{\mathrm{U}} - \frac{1}{\sqrt{\Pi_2}}\sum_{l' = l-\bar{M}_1+1}^{L_2} \rho^{(2)}_{l-\bar{M}_1,l'} \sqrt{p_{2,l'}^{\mathrm{U}}} \hat{s}_{2,l'}^{\mathrm{U}} \nonumber\\
		&= \frac{1}{\sqrt{\Pi_1}}\rho^{(1)}_{l,l} \sqrt{p_{1,l}^{\mathrm{U}}} s_{1,l}^{\mathrm{U}} + \frac{1}{\sqrt{\Pi_2}}\rho^{(2)}_{l-\bar{M}_1,l-\bar{M}_1} \sqrt{p_{2,l-\bar{M}_1}^{\mathrm{U}}} s_{2,l-\bar{M}_1}^{\mathrm{U}} + [\boldsymbol{n}^{\mathrm{U}}]_l, \label{eqn:yh2}
		\end{align}
		contains residual inter-user-interference which is canceled as in SISO-NOMA \cite{Saito2013} where the second user's signal is decoded directly and the first user's signal is decoded after SIC. Hence, in this step, $s_{1,\bar{M}_1+1}^{\mathrm{U}},\dots,s_{1,L_1}^{\mathrm{U}}$ and $s_{2,1}^{\mathrm{U}},\dots,s_{2,M}^{\mathrm{U}}$ are decoded in reverse order.
		
		\item Lastly, if $\bar{M}_1 > 0,$ for elements $[\boldsymbol{y}^{\mathrm{U}}]_{l}, l=1,\dots,\bar{M}_1$ (in reverse order), self-interference and inter-user-interference from the previously decoded symbols is cancelled, resulting in the signal
		\begin{align}
		[\hat{\boldsymbol{y}}^{\mathrm{U}}]_{l} &= [\boldsymbol{y}^{\mathrm{U}}]_{l} - \frac{1}{\sqrt{\Pi_1}}\sum_{l'=l+1}^{L_1} \rho^{(1)}_{l,l'} \sqrt{p_{1,l'}^{\mathrm{U}}} \hat{s}_{1,l'}^{\mathrm{U}} - \frac{1}{\sqrt{\Pi_2}}\sum_{l' = 1}^{L_2} x_{l,l'}^{\mathrm{U}} \sqrt{p_{2,l'}^{\mathrm{U}}} \hat{s}_{2,l'}^{\mathrm{U}} \nonumber\\&= \frac{1}{\sqrt{\Pi_1}}\rho^{(1)}_{l,l} \sqrt{p_{1,l}^{\mathrm{U}}} s_{1,l}^{\mathrm{U}} + [\boldsymbol{n}^{\mathrm{U}}]_l, \label{eqn:yh1}
		\end{align}
		which is used for decoding $s_{1,1}^{\mathrm{U}},\dots,s_{1,\bar{M}_1}^{\mathrm{U}}$ in reverse order.
	\end{enumerate}
	
	\begin{remark}
	We note that after canceling the interference from the previously decoded symbols, $\bar{M}_1$ spatial streams of the first user and $\bar{M}_2$ spatial streams of the second user contain no inter-user-interference and can be decoded directly owing to the scheme's ability to exploit the null spaces of the MIMO channels of the users.
	\end{remark}
	
	\begin{remark}
		For decoding order (D-1-2), encoding and decoding are performed similarly as for decoding order (D-2-1), however, the roles of $\boldsymbol{s}_1$ and $\boldsymbol{s}_2$ and those of $\boldsymbol{H}_1^\mathrm{H}$ and $\boldsymbol{H}_2^\mathrm{H}$ are reversed.
	\end{remark}

	\subsection{Computational Complexity for Uplink Precoding}
	\label{sec:ucomp}
	To evaluate the complexity of computing the precoder and detection matrices, we consider the worst-case scenario $M_1=M_2=N.$ The proposed ST precoding, ZF- and MMSE-based precoding \cite{Chen2016, Chen2016a, Higuchi2015}, singular value decomposition (SVD)-based precoding\footnote{\scalebox{0.99}{SVD-based precoding matrices \cite{Gamal2011} are utilized to diagonalize the MIMO channels of the users, see Section \ref{sec:simuplink} for more details.}}, and GSVD-based precoding \cite{Ma2016} entail complexities of $\mathcal{O}\mkern-\medmuskip
\left(4N^3\right),$ $\mathcal{O}\mkern-\medmuskip
\left(5N^3\right)$ \cite{Krishnamoorthy2013}, $\mathcal{O}\mkern-\medmuskip
\left(18 N^3\right)$ \cite[Sec. 5.5]{Bai1992}, and $\mathcal{O}\mkern-\medmuskip
\left(38.66 N^3\right)$ \cite[Sec. 5.5]{Bai1992}, respectively, as summarized in Table \ref{tab:u}. For the proposed uplink scheme, the complexity of $\mathcal{O}\mkern-\medmuskip
\left(4 N^3\right)$ is incurred by two QR decompositions \cite[Alg. 5.2.5]{Golub2012}, see proof of Theorem \ref{th:stu}. Therefore, the proposed scheme entails an overall worst-case complexity of $\mathcal{O}\mkern-\medmuskip
\left(N^3\right),$ which is identical to those of ZF-, MMSE-, and GSVD-based precoding.
	
	In the following, we provide expressions for the achievable rates for (D-2-1) and (D-1-2) decoding.
	
	\subsection{Uplink Achievable Rates}
	\label{sec:urates}
	For (D-2-1) and for the second user, from (\ref{eqn:yh3}) and (\ref{eqn:yh2}), the achievable rate of symbols $s_{2,l}^{\mathrm{U}}, l=1,\dots,L_2,$ is given by\footnote{The achievable rates for (D-2-1) and (D-1-2) are indexed by $(1)$ and $(2),$ respectively.}
	\begin{equation}
	R_{2,l}^{{\mathrm{U}}, (1)} = \begin{cases}
	\log_2\left(1+\frac{\frac{1}{{\Pi_2}}p_{2,l}^{\mathrm{U}} (\rho^{(2)}_{l,l})^2}{\sigma^2 + \frac{1}{{\Pi_1}}p_{1,\bar{M}_1+l}^{\mathrm{U}} (\rho^{(1)}_{\bar{M}_1+l,\bar{M}_1+l})^2}\right) & \text{for $l=1,\dots,M$} \\
	\log_2\left(1+ \frac{1}{{\Pi_2}}\frac{p_{2,l}^{\mathrm{U}} (\rho^{(2)}_{l,l})^2}{\sigma^2}\right) & \text{for $l=M+1,\dots,L_2.$}
	\end{cases} \label{eqn:rsuf2}
	\end{equation}
	
	Furthermore, from (\ref{eqn:yh2}) and (\ref{eqn:yh1}), assuming successful SIC, the achievable rate of symbols $s_{1,l}^{\mathrm{U}}, l=1,\dots,L_1,$ is given by
	\begin{equation}
	\begin{array}{lr}
	R_{1,l}^{{\mathrm{U}}, (1)} = \log_2\left(1+\frac{1}{{\Pi_1}}\frac{p_{1,l}^{\mathrm{U}} (\rho^{(1)}_{l,l})^2}{\sigma^2}\right) & \text{for $l=1,\dots,L_1.$}
	\end{array} \label{eqn:rsuf1}
	\end{equation}
	
	Analogously, for (D-1-2) and for the first user, the achievable rate of symbols $s_{1,l}^{\mathrm{U}}, l=1,\dots,L_1,$ is given by
	\begin{equation}
	R_{1,l}^{{\mathrm{U}}, (2)} = \begin{cases}
	\log_2\left(1+\frac{\frac{1}{{\Pi_1}}p_{1,l}^{\mathrm{U}} (\rho^{(1)}_{l,l})^2}{\sigma^2 +  \frac{1}{{\Pi_2}}p_{2,\bar{M}_2+l}^{\mathrm{U}} (\rho^{(2)}_{\bar{M}_2+l,\bar{M}_2+l})^2}\right) & \text{for $l=1,\dots,M$} \\
	\log_2\left(1+ \frac{1}{{\Pi_1}}\frac{p_{2,l}^{\mathrm{U}} (\rho^{(1)}_{l,l})^2}{\sigma^2}\right) & \text{for $l=M+1,\dots,L_1.$}
	\end{cases} \label{eqn:rfuf1}
	\end{equation}
	
	For the second user, after SIC, the achievable rate of symbols $s_{2,l}^{\mathrm{U}}, l=1,\dots,L_2,$ is given by
	\begin{equation}
	\begin{array}{lr}
	R_{2,l}^{{\mathrm{U}}, (2)} = \log_2\left(1+\frac{1}{{\Pi_2}}\frac{p_{2,l}^{\mathrm{U}} (\rho^{(2)}_{l,l})^2}{\sigma^2}\right) & \text{for $l=1,\dots,L_2.$}
	\end{array} \label{eqn:rfuf2}
	\end{equation}
	
	\section{Proposed Downlink ST MIMO-NOMA Scheme}
	\label{sec:propdownlink}
	Analogous to Section \ref{sec:propuplink}, in this section, we begin by developing a matrix decomposition for ST of the downlink MIMO channels of the users, which we then exploit to design the proposed downlink ST MIMO-NOMA precoding and decoding schemes. Subsequently, the corresponding achievable rate expressions are determined.
	
	\subsection{Simultaneous Triangularization for Downlink Transmission}
	Let\footnote{In this section, we define $\bar{M}_1, \bar{M}_2,$ and $M$ for downlink transmission in terms of downlink symbol vector length $L.$ In contrast, in Section \ref{sec:propuplink}, they were defined for uplink transmission in terms of uplink symbol vector lengths $L_1$ and $L_2.$ Nevertheless, we emphasize that the values obtained for  $\bar{M}_1, \bar{M}_2,$ and $M$ in both cases are identical.} $\bar{M}_1 = \mathrm{max}\left\{0,\mathrm{min}\left\{M_1,L-M_2\right\}\right\}, \bar{M}_2 = \mathrm{max}\left\{0,\mathrm{min}\left\{M_2,L-M_1\right\}\right\},$ and $M = N-\bar{M}_1-\bar{M}_2.$ Downlink ST of $\boldsymbol{H}_1$ and $\boldsymbol{H}_2$ is compactly stated in the following theorem.
	
	\begin{theorem}
		\label{th:std}
		Let $\boldsymbol{H}_1$ and $\boldsymbol{H}_2$ be as defined in Section \ref{sec:cm}. Then, there exist unitary matrices $\boldsymbol{Q}_1^{\mathrm{D}} \in \mathbb{C}^{M_1\times M_1}, \boldsymbol{Q}_2^{\mathrm{D}} \in \mathbb{C}^{M_2\times M_2},$ and a full matrix $\boldsymbol{X}^{\mathrm{D}} \in \mathbb{C}^{N\times L}$ such that
		\begin{align}
		\boldsymbol{Q}_1^{\mathrm{D}}\boldsymbol{H}_1\boldsymbol{X}^{\mathrm{D}} &= \begin{bmatrix}\boldsymbol{R}_1^{\mathrm{D}} & \boldsymbol{0}\end{bmatrix}, \label{eqn:std1}\\
		\boldsymbol{Q}_2^{\mathrm{D}}\boldsymbol{H}_2\boldsymbol{X}^{\mathrm{D}} &= \begin{bmatrix}\boldsymbol{R}_2'^{\mathrm{D}} & \boldsymbol{0} & \boldsymbol{R}_2''^{\mathrm{D}}\end{bmatrix}, \label{eqn:std2}
		\end{align}
		where $\boldsymbol{R}_1^{\mathrm{D}} \in \mathbb{C}^{M_1\times (M+\bar{M}_1)}, \boldsymbol{R}_2'^{\mathrm{D}} \in \mathbb{C}^{M_2\times M},$ and $\boldsymbol{R}_2''^{\mathrm{D}} \in \mathbb{C}^{M_2\times \bar{M}_2}.$ Furthermore, $\boldsymbol{R}_1^{\mathrm{D}}$ and $\boldsymbol{R}_2^{\mathrm{D}} = \begin{bmatrix}\boldsymbol{R}_2'^{\mathrm{D}} & \boldsymbol{R}_2''^{\mathrm{D}}\end{bmatrix} \in \mathbb{C}^{M_2\times (M+\bar{M}_2)}$ are upper-triangular matrices with real-valued entries on their main diagonals.
	\end{theorem}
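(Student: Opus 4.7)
The plan is to construct $\boldsymbol{X}^{\mathrm{D}}$, $\boldsymbol{Q}_1^{\mathrm{D}}$, and $\boldsymbol{Q}_2^{\mathrm{D}}$ directly, combining bases of $\mathrm{null}(\boldsymbol{H}_1)$ and $\mathrm{null}(\boldsymbol{H}_2)$ with two ordinary QR decompositions. I partition the precoder as $\boldsymbol{X}^{\mathrm{D}} = [\boldsymbol{X}_a^{\mathrm{D}},\boldsymbol{X}_b^{\mathrm{D}},\boldsymbol{X}_c^{\mathrm{D}}]$ with column widths $M$, $\bar{M}_1$, and $\bar{M}_2$, respectively. Then the trailing zero block demanded in (\ref{eqn:std1}) becomes the condition $\boldsymbol{H}_1\boldsymbol{X}_c^{\mathrm{D}}=\boldsymbol{0}$, and the interior zero block in (\ref{eqn:std2}) becomes $\boldsymbol{H}_2\boldsymbol{X}_b^{\mathrm{D}}=\boldsymbol{0}$. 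The triangular factors $\boldsymbol{R}_1^{\mathrm{D}}$ and $[\boldsymbol{R}_2'^{\mathrm{D}},\boldsymbol{R}_2''^{\mathrm{D}}]$ then live entirely in $\boldsymbol{H}_1[\boldsymbol{X}_a^{\mathrm{D}},\boldsymbol{X}_b^{\mathrm{D}}]$ and $\boldsymbol{H}_2[\boldsymbol{X}_a^{\mathrm{D}},\boldsymbol{X}_c^{\mathrm{D}}]$, respectively.

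Accordingly, I first select $\boldsymbol{X}_c^{\mathrm{D}}$ as an orthonormal basis of any $\bar{M}_2$-dimensional subspace of $\mathrm{null}(\boldsymbol{H}_1)$; under the standing full-row-rank assumption on $\boldsymbol{H}_1$, this is feasible because $\dim\mathrm{null}(\boldsymbol{H}_1)=N-M_1\geq L-M_1\geq\bar{M}_2$, straight from the definition of $\bar{M}_2$. Analogously, $\boldsymbol{X}_b^{\mathrm{D}}$ is chosen inside $\mathrm{null}(\boldsymbol{H}_2)$, and the inner block $\boldsymbol{X}_a^{\mathrm{D}}$ is taken as an orthonormal basis of a generic $M$-dimensional complement so that $\boldsymbol{X}^{\mathrm{D}}$ has full column rank. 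Using the identities $M+\bar{M}_1=M_1$ and $M+\bar{M}_2=M_2$, which follow from a straightforward case distinction on whether $M_1+M_2\leq N$ in the definitions of $\bar{M}_1,\bar{M}_2,L$, the matrices $\boldsymbol{H}_1[\boldsymbol{X}_a^{\mathrm{D}},\boldsymbol{X}_b^{\mathrm{D}}]\in\mathbb{C}^{M_1\times M_1}$ and $\boldsymbol{H}_2[\boldsymbol{X}_a^{\mathrm{D}},\boldsymbol{X}_c^{\mathrm{D}}]\in\mathbb{C}^{M_2\times M_2}$ are square. A standard QR decomposition of each, with the phases of the diagonal entries absorbed into the unitary factor, then delivers $\boldsymbol{Q}_1^{\mathrm{D}}$ and $\boldsymbol{Q}_2^{\mathrm{D}}$ together with upper-triangular $\boldsymbol{R}_1^{\mathrm{D}}$ and $[\boldsymbol{R}_2'^{\mathrm{D}},\boldsymbol{R}_2''^{\mathrm{D}}]$ carrying real, non-negative diagonals. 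Reinserting the zero columns $\boldsymbol{H}_1\boldsymbol{X}_c^{\mathrm{D}}=\boldsymbol{0}$ and $\boldsymbol{H}_2\boldsymbol{X}_b^{\mathrm{D}}=\boldsymbol{0}$ at the trailing and interior positions, respectively, reconstructs (\ref{eqn:std1}) and (\ref{eqn:std2}).

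The main obstacle is ensuring that the two QR factors remain non-singular, i.e., that $\boldsymbol{H}_1\boldsymbol{X}_b^{\mathrm{D}}$ and $\boldsymbol{H}_2\boldsymbol{X}_c^{\mathrm{D}}$ keep full column rank, which amounts to choosing $\boldsymbol{X}_b^{\mathrm{D}}$ and $\boldsymbol{X}_c^{\mathrm{D}}$ transversal to $\mathrm{null}(\boldsymbol{H}_1)\cap\mathrm{null}(\boldsymbol{H}_2)$. When $M_1+M_2\geq N$, this common null space is trivial for generic channel realizations and the condition is automatic; when $M_1+M_2<N$, a dimension count gives $\dim\mathrm{null}(\boldsymbol{H}_2)-\dim(\mathrm{null}(\boldsymbol{H}_1)\cap\mathrm{null}(\boldsymbol{H}_2))=M_1\geq\bar{M}_1$, so a transversal $\bar{M}_1$-dimensional subspace of $\mathrm{null}(\boldsymbol{H}_2)$ still exists and the construction extends. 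The remaining work is purely bookkeeping on block sizes.
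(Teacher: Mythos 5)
Your construction is essentially identical to the paper's proof: the paper also builds $\boldsymbol{X}^{\mathrm{D}}=\begin{bmatrix}\boldsymbol{K} & \bar{\boldsymbol{H}}_2 & \bar{\boldsymbol{H}}_1\end{bmatrix}$ from orthonormal bases of the two channel null spaces plus a complementary block $\boldsymbol{K}$, and obtains $\boldsymbol{Q}_k^{\mathrm{D}}$ and the triangular factors from QR decompositions of $\boldsymbol{H}_1\begin{bmatrix}\boldsymbol{K} & \bar{\boldsymbol{H}}_2\end{bmatrix}$ and $\boldsymbol{H}_2\begin{bmatrix}\boldsymbol{K} & \bar{\boldsymbol{H}}_1\end{bmatrix}$, with the zero blocks reinserted exactly as you describe. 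Your additional bookkeeping on block dimensions and on the generic non-singularity of the triangular factors goes slightly beyond what the paper records, but does not change the argument.
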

	\begin{proof}
		Please refer Appendix \ref{app:std}.
	\end{proof}	
	
	\subsection{Proposed Downlink Pre-processing Scheme}
	\label{sec:prec}
	Based on Theorem \ref{th:std}, the precoder matrix can be chosen as $\boldsymbol{P}^{\mathrm{D}} = \boldsymbol{X}^{\mathrm{D}},$ and the detection matrices of users 1 and 2 can be chosen directly as $\boldsymbol{Q}_1^{\mathrm{D}}$ and $\boldsymbol{Q}_2^{\mathrm{D}}$ for users 1 and 2, respectively\footnote{Note that although the proposed scheme utilizes QR decomposition based detection matrices, other detection schemes such as zero forcing or joint decoding can also be utilized for the proposed precoder $\boldsymbol{P}^{\mathrm{D}} = \boldsymbol{X}^{\mathrm{D}}.$}. Hence, the received signal at the users, based on (\ref{eqn:yk}), can be simplified to
	\begin{align}
	\tilde{\boldsymbol{y}}_k^{\mathrm{D}} &= \frac{1}{\sqrt{\Pi_k}} \boldsymbol{R}_k^{\mathrm{D}} \tilde{\boldsymbol{s}}_k^{\mathrm{D}} + \tilde{\boldsymbol{n}}_k^{\mathrm{D}} \label{eqn:ykr}
	\end{align}
	where, based on Theorem \ref{th:std}, symbol vectors $\tilde{\boldsymbol{s}}_k^{\mathrm{D}} \in \mathbb{C}^{(M+\bar{M}_k)\times 1}$ are defined as $[\tilde{\boldsymbol{s}}_k^{\mathrm{D}}]_l = s_l^{\mathrm{D}},$ $l=1,\dots,M,$ $[\tilde{\boldsymbol{s}}_1^{\mathrm{D}}]_{l+M} = s_{l+M}^{\mathrm{D}},$ $l=1,\dots,\bar{M}_1,$ and $[\tilde{\boldsymbol{s}}_2^{\mathrm{D}}]_{l+M} = s_{l+M+\bar{M}_1}^{\mathrm{D}},$ $l=1,\dots,\bar{M}_2,$ and $\tilde{\boldsymbol{n}}_k^{\mathrm{D}}$ contains the corresponding elements of $\boldsymbol{n}_k^{\mathrm{D}}.$
	
	As symbols $s_{l}^{\mathrm{D}}, l=M+1,\dots,M+\bar{M}_1,$ are only transmitted to user 1, $p_{2,l}^{\mathrm{D}} = 0$ for $l=M+1,\dots,M+\bar{M}_1.$ Similarly, $p_{1,l}^{\mathrm{D}} = 0$ for $l=M+\bar{M}_1+1,\dots,L.$ Power allocation coefficients $p_{k,l}^{\mathrm{D}}, k=1,2, l=1,\dots,M,$ can be used to adjust the rates of users 1 and 2. Lastly, the $M_k\times(M+\bar{M}_k)$ matrices $\boldsymbol{R}_k^{\mathrm{D}}, k=1,2,$ are of the form		
	\begin{equation}
	\begin{bmatrix}
		\varrho^{(k)}_{1,1} &  \dots   & \varrho^{(k)}_{1,M+\bar{M}_k}   \\
		0		 & \ddots &  \vdots                       \\
		\vdots        &   0      & \varrho^{(k)}_{M+\bar{M}_k,M+\bar{M}_k} \\
		\boldsymbol{0}    & \boldsymbol{0} &  \boldsymbol{0}\\
		\end{bmatrix}. \label{eqn:Rk}
	\end{equation}
	
	\subsection{Proposed Downlink Decoding Scheme}
	\label{sec:dec}
	Let $\hat{s}_{k,l}^{\mathrm{D}},$ $k=1,2,l=1,\dots,L,$ denote the detected symbols corresponding to transmitted symbols $s_{k,l}^{\mathrm{D}}.$ As the rates of $s_{k,l}^{\mathrm{D}},$ $k=1,2,l=1,\dots,L,$ are chosen such that they lie within the achievable rate region, perfect decoding, i.e., $\hat{s}_{k,l}^{\mathrm{D}} = s_{k,l}^{\mathrm{D}},$ is assumed in the following.
	
	From the upper triangular structure of $\boldsymbol{R}_k^{\mathrm{D}}$ in (\ref{eqn:Rk}), and based on (\ref{eqn:std1}) and (\ref{eqn:std2}), we note that the symbols corresponding to the last $\bar{M}_k$ columns of $\boldsymbol{R}_k^{\mathrm{D}}, k=1,2,$ which are only transmitted to user $k,$ contain no inter-user-interference and are therefore decoded directly, in reverse order. Next, the symbols corresponding to the first $M$ columns of $\boldsymbol{R}_k^{\mathrm{D}}, k=1,2,$ which are transmitted to both users, are decoded as in SISO-NOMA \cite{Saito2013}, also in reverse order, see Figure \ref{fig:std}. For each symbol, the self-interference of the previously decoded symbols is eliminated. The decoding process is described in detail below.

	\begin{figure*}
		\centering
		\begin{minipage}{0.5\textwidth}
			\centering
			\includegraphics[width=\textwidth]{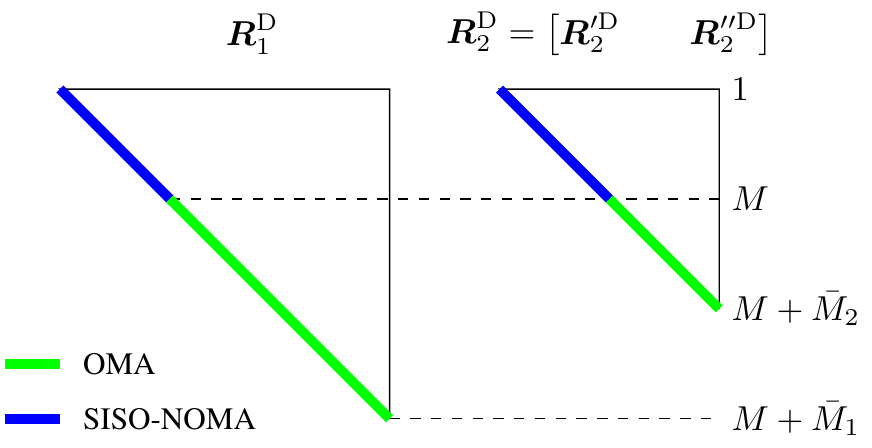}
			\caption{Simultaneous triangularization in downlink ST MIMO-NOMA for $\bar{M}_1,\bar{M}_2,M > 0.$ The green and blue diagonal portions are decoded successively using conventional OMA and SISO-NOMA decoding, respectively.}
			\label{fig:std}
		\end{minipage}\hspace{10pt}%
		\begin{minipage}{0.45\textwidth}
			\centering
			\captionof{table}{Complexity of precoder and detection matrix computation for the proposed downlink ST MIMO-NOMA and the considered MIMO-NOMA baseline schemes.}
			\label{tab:d}
			\begin{tabular}{|p{0.55\textwidth}|p{0.3\textwidth}|}
				\hline
				MIMO-NOMA Scheme & Complexity	 \\\hline\hline
				Proposed ST & $\mathcal{O}\mkern-\medmuskip
\left(6 N^3\right)$ \\\hline
				SD in \cite{Krishnamoorthy2020} & $\mathcal{O}\mkern-\medmuskip
\left(44.66 N^3\right)$ \\\hline
				SD in \cite{Chen2019} & $\mathcal{O}\mkern-\medmuskip
\left(38.66 N^3\right)$ \\\hline
			\end{tabular}
		\end{minipage}
	\end{figure*}
	
	The first user decodes the symbols as follows.
	\begin{enumerate}
		\item If $\bar{M}_1 > 0,$ symbols $s_{1,l}^{\mathrm{D}},$ $l=M+1,\dots,M+\bar{M}_1,$ are decoded, in reverse order, starting from the self-interference free element $[\tilde{\boldsymbol{y}}_1^{\mathrm{D}}]_{M+\bar{M}_1}$ given in (\ref{eqn:ykr}). For each subsequent symbol, the self-interference from the previously decoded symbols\footnote{Similar to the uplink case, the assumption of perfect decoding is justified as the rates corresponding to the decoded symbols are at or below the achievable rate $R_{k,l}$ provided in Section \ref{sec:drates}.} is eliminated, resulting in the self-interference free signal
		\begin{align}
		[\hat{\boldsymbol{y}}_1^{\mathrm{D}}]_{l} = [\tilde{\boldsymbol{y}}_1^{\mathrm{D}}]_{l} - \frac{1}{\sqrt{\Pi_1}} \sum_{l'=l+1}^{M+\bar{M}_1} \varrho^{(1)}_{l,l'}\sqrt{p_{1,l'}^{\mathrm{D}}}\hat{s}_{1,l'}^{\mathrm{D}} = \frac{1}{\sqrt{\Pi_1}} \varrho^{(1)}_{l,l} \sqrt{p_{1,l}^{\mathrm{D}}} s_{1,l}^{\mathrm{D}} + [\tilde{\boldsymbol{n}}_k^{\mathrm{D}}]_l, \label{eqn:dec12}
		\end{align}
		for $l=M+1,\dots,M+\bar{M}_1$ (in reverse order), which is decoded.
		
		\item Next, if $M > 0,$ symbols $s_{1,l}^{\mathrm{D}},$ $l=1,\dots,M,$ which contain inter-user-interference from $s_{2,l}^{\mathrm{D}},l=1,\dots,M,$ are decoded directly as in SISO-NOMA, in reverse order. Then, as above, for each symbol, the self-interference from the previously decoded symbols is eliminated, resulting in the self-interference free signal
		\begin{align}
		[\hat{\boldsymbol{y}}_1^{\mathrm{D}}]_{l} &= [\tilde{\boldsymbol{y}}_1^{\mathrm{D}}]_{l} - \frac{1}{\sqrt{\Pi_1}} \sum_{l'=l+1}^{M+\bar{M}_1} \varrho^{(1)}_{l,l'}\sqrt{p_{1,l'}^{\mathrm{D}}} \hat{s}_{1,l'}^{\mathrm{D}} \nonumber\\
		&= \frac{1}{\sqrt{\Pi_1}} \varrho^{(1)}_{l,l} \sqrt{p_{1,l}^{\mathrm{D}}} s_{1,l}^{\mathrm{D}} + \frac{1}{\sqrt{\Pi_1}} \sum_{l'=l}^{M} \varrho^{(1)}_{l,l'}\sqrt{p_{2,l'}^{\mathrm{D}}} s_{2,l'}^{\mathrm{D}} + [\tilde{\boldsymbol{n}}_k^{\mathrm{D}}]_l, \label{eqn:dec11}
		\end{align}
		for $l=1,\dots,M$ (in reverse order), which is decoded. Note that the residual inter-user-interference from symbols $s_{2,l}^{\mathrm{D}},l=1,\dots,M,$ cannot be eliminated and is treated as noise.
		
	\end{enumerate}
	Analogously, the second user decodes the symbols as follows.
	\begin{enumerate}
		\item If $\bar{M}_2 > 0,$ symbols $s_{2,l}^{\mathrm{D}},$ $l=M+\bar{M}_1+1,\dots,M+\bar{M}_1+\bar{M}_2,$ are decoded, in reverse order, starting from the self-interference free element $[\tilde{\boldsymbol{y}}_2^{\mathrm{D}}]_{M+\bar{M}_2}$ given in (\ref{eqn:ykr}). For each symbol, the self-interference from the previously decoded symbols is eliminated, resulting in the self-interference free signal
		\begin{align}
		[\hat{\boldsymbol{y}}_2^{\mathrm{D}}]_{l-\bar{M}_1} &= [\tilde{\boldsymbol{y}}_2^{\mathrm{D}}]_{l-\bar{M}_1} -  \frac{1}{\sqrt{\Pi_2}} \sum_{l'=l+1}^{M+\bar{M}_1+\bar{M}_2} \varrho^{(2)}_{l-\bar{M}_1,l'-\bar{M}_1}\sqrt{p_{2,l'}^{\mathrm{D}}}\hat{s}_{2,l'}^{\mathrm{D}} \nonumber\\
		&= \frac{1}{\sqrt{\Pi_2}} \varrho^{(2)}_{l-\bar{M}_1,l-\bar{M}_1} \sqrt{p_{2,l}^{\mathrm{D}}} s_{2,l}^{\mathrm{D}} + [\tilde{\boldsymbol{n}}_k^{\mathrm{D}}]_{l-\bar{M}_1}, \label{eqn:dec22}
		\end{align}
		for $l=M+\bar{M}_1+1,\dots,M+\bar{M}_1+\bar{M}_2$ (in reverse order), which is decoded.
		
		\item Next, if $M > 0,$ symbols $s_{k,l}^{\mathrm{D}},$ $k=1,2,l=1,\dots,M,$ are decoded as in SISO-NOMA \cite{Saito2013} where the first user's symbols are decoded directly and the second user's symbols are decoded after SIC, in reverse order. In this case also, for each symbol, the self-interference from the previously decoded symbols is eliminated, resulting in the self-interference free signal
		\begin{align}
		[\hat{\boldsymbol{y}}_2^{\mathrm{D}}]_{l} &= [\tilde{\boldsymbol{y}}_2^{\mathrm{D}}]_{l} - \frac{1}{\sqrt{\Pi_2}} \sum_{l'=l+1}^{M} \varrho^{(2)}_{l,l'}(\sqrt{p_{1,l'}^{\mathrm{D}}}\hat{s}_{1,l'}^{\mathrm{D}}+\sqrt{p_{2,l'}^{\mathrm{D}}}\hat{s}_{2,l'}^{\mathrm{D}}) - \frac{1}{\sqrt{\Pi_2}} \sum_{l'=M+\bar{M}_1+1}^{M+\bar{M}_1+\bar{M}_2} \varrho^{(2)}_{l,l'-\bar{M}_1}\sqrt{p_{2,l'}^{\mathrm{D}}}\hat{s}_{2,l'}^{\mathrm{D}} \nonumber\\
		&= \frac{1}{\sqrt{\Pi_2}} \varrho^{(2)}_{l,l} \left(\sqrt{p_{1,l}^{\mathrm{D}}}s_{1,l}^{\mathrm{D}} + \sqrt{p_{2,l}^{\mathrm{D}}}s_{2,l}^{\mathrm{D}}\right) + [\tilde{\boldsymbol{n}}_k^{\mathrm{D}}]_l, \label{eqn:dec21}
		\end{align}
		for $l=1,\dots,M$ (in reverse order), which is decoded as in SISO-NOMA.
	\end{enumerate}
	
	As seen from the decoding scheme above, the MIMO-NOMA channels of the users are decomposed into scalar channels. Furthermore, the second user is always the SIC user, i.e., no SIC capability is necessary at the first 
	user. Therefore, unlike the uplink case with two decoding orders, in this case, only decoding order (D-1-2) at user 2 is utilized.
		
	\begin{remark}
		From (\ref{eqn:dec12}) and (\ref{eqn:dec22}), we observe that the proposed MIMO-NOMA precoding scheme exploits the available null spaces of the MIMO channel matrices of the users for transmitting $\bar{M}_1$ and $\bar{M}_2$ symbols inter-user-interference free to users 1 and 2, respectively. Furthermore, from (\ref{eqn:x}), we note that the precoder matrix avoids inversion of the MIMO channels of the users.
	\end{remark}	

	\subsection{Computational Complexity for Downlink}
	\label{sec:dcomp}
	For an $m\times n$ matrix $\boldsymbol{X}$, a matrix $\bar{\boldsymbol{X}}$ containing a basis for the null space can be computed using the QR decomposition, which entails a complexity of $\mathcal{O}\mkern-\medmuskip
\left(2 n m^2\right)$ \cite[Alg. 5.2.5]{Golub2012}. Hence, constructing the precoder matrix in (\ref{eqn:x}) entails a total complexity of $\mathcal{O}\mkern-\medmuskip
\left(2 N M_1^2 + 2 N M_2^2 + 2 N (N-M)^2\right)$ at the BS for computing $\bar{\boldsymbol{H}}_1, \bar{\boldsymbol{H}}_2,$ and $\boldsymbol{K}$ via the QR decomposition. Next, at user $k,$ $k=1,2,$ detection matrix $\boldsymbol{Q}_k,$ which is also computed using the QR decomposition, and self-interference cancellation entail complexities of $\mathcal{O}\mkern-\medmuskip
\left(2 M_k^3\right)$ and $\mathcal{O}\mkern-\medmuskip
\left(M_k^2\right),$ respectively, resulting in a total complexity of $\mathcal{O}\mkern-\medmuskip
\left(6 N^3\right)$ for $M_1=M_2=N.$ For comparison, the SD schemes in \cite{Krishnamoorthy2020} and \cite{Chen2019} entail complexities of $\mathcal{O}\mkern-\medmuskip
\left(44.66 N^3\right)$ and $\mathcal{O}\mkern-\medmuskip
\left(38.66 N^3\right),$ respectively \cite[Sec. III-D]{Krishnamoorthy2020}, as summarized in Table \ref{tab:d}. Therefore, the proposed scheme and the SD schemes in \cite{Chen2019} and \cite{Krishnamoorthy2020} entail an identical overall worst case complexity of $\mathcal{O}\mkern-\medmuskip
\left(N^3\right).$
	
	\subsection{Downlink Achievable Rates}
	\label{sec:drates}
	At the first user, based on (\ref{eqn:dec11}), the achievable rate for $s_{1,l}^{\mathrm{D}}, l=1,\dots,L,$ after self-interference cancellation, is given by
	\begin{align}
	R_{1,l}^{{\mathrm{D}},(1)} = \log_2\left(1 + \frac{\frac{1}{\Pi_1}p_{1,l}^{\mathrm{D}}\left|\varrho^{(1)}_{l,l}\right|^2}{\sigma^2 + \frac{1}{\Pi_1}\sum_{l'=l}^{M}p_{2,l'}^{\mathrm{D}}\left|\varrho^{(1)}_{l,l'}\right|^2}\right), \label{eqn:r1lm}
	\end{align}
	for $l = 1,\dots,M,$ and, based on (\ref{eqn:dec12}), by
	\begin{align}
	R_{1,l}^{{\mathrm{D}}, (1)} = \log_2\left(1 + \frac{1}{\Pi_1}\frac{p_{1,l}^{\mathrm{D}}\left|\varrho^{(1)}_{l,l}\right|^2}{\sigma^2}\right),
	\end{align}
	for $l = M+1,\dots,M+\bar{M}_1$. Furthermore, as mentioned in Section \ref{sec:prec}, $R_{1,l}^{{\mathrm{D}}, (1)}=0$ for $l=M+\bar{M}_1+1,\dots,L.$
	
	Similarly, at the second user, the achievable rate for $s_{1,l}^{\mathrm{D}}, l=1,\dots,M,$ after self-interference cancellation, based on (\ref{eqn:dec21}), is given by
	\begin{align}
	R_{1,l}^{{\mathrm{D}}, (2)} = \log_2\left(1 + \frac{\frac{1}{\Pi_2}p_{1,l}^{\mathrm{D}}\left|\varrho^{(2)}_{l,l}\right|^2}{\sigma^2 + \frac{1}{\Pi_2}p_{2,l}^{\mathrm{D}}\left|\varrho^{(2)}_{l,l}\right|^2}\right). \label{eqn:r1lm2}
	\end{align}
	Note that the difference to (\ref{eqn:r1lm}) arises because the inter-user-interference from symbols $s_{2,l}^{\mathrm{D}}, l=1,\dots,M,$ can be eliminated at the second user as both users' symbols are decoded as in SISO-NOMA.
	
	In order to ensure that symbols $s_{1,l}^{\mathrm{D}}, l=1,\dots,M,$ can be decoded at both users, the rates are chosen as the instantaneous minimum rate. Hence,
	\begin{align}
	R^{\mathrm{D}}_{1,l} = \mathrm{min}\left\{R_{1,l}^{{\mathrm{D}}, (1)}, R_{1,l}^{{\mathrm{D}}, (2)}\right\}, \label{eqn:r1l}
	\end{align}
	for $l=1,\dots,M,$ and $R^{\mathrm{D}}_{1,l} = R_{1,l}^{{\mathrm{D}}, (1)}$ for $l=M+1,\dots,L.$
	
	Lastly, at the second user, based on (\ref{eqn:dec21}), for $l=1,\dots,M,$ the achievable rate after SIC and self-interference cancellation is given by
	\begin{align}
	R^{\mathrm{D}}_{2,l} = \log_2\left(1 + \frac{1}{\Pi_2}\frac{p_{2,l}^{\mathrm{D}}\left|\varrho^{(2)}_{l,l}\right|^2}{\sigma^2}\right), \label{eqn:r2lm}
	\end{align}
	and, based on (\ref{eqn:dec22}), for $l=M+\bar{M}_1+1,\dots,L,$ the achievable rate after self-interference cancellation is given by
	\begin{align}
	R^{\mathrm{D}}_{2,l} = \log_2\left(1 + \frac{1}{\Pi_2}\frac{p_{2,l}^{\mathrm{D}}\left|\varrho^{(2)}_{l-\bar{M}_1,l-\bar{M}_1}\right|^2}{\sigma^2}\right).
	\end{align}
	Furthermore, as mentioned in Section \ref{sec:prec}, $R^{\mathrm{D}}_{2,l}=0$ for $l=M+1,\dots,M+\bar{M}_1.$
	\section{Maximum Achievable Rate Regions}
	\label{sec:opa}
	In this section, we characterize the maximum achievable rate regions of the proposed uplink and downlink ST MIMO-NOMA schemes.
	
	\subsection{Uplink}
	\label{sec:uopt}
	Based on the achievable rates given in Section \ref{sec:urates}, the optimal power allocation for the proposed uplink scheme for (D-2-1) and (D-1-2) can be obtained by solving the following optimization problem:
	\begin{maxi!}
		{p_{k,l}^{\mathrm{U}} \geq 0\,\forall\,k,l}
		{\sum_{l=1}^{L} \left[R_{1,l}^{{\mathrm{U}},(d)} + R_{2,l}^{{\mathrm{U}},(d)}\right]}
		{\label{opt:u1}}
		{}
		\addConstraint{\sum_{l=1}^{L} p_{1,l}^{\mathrm{U}}}{\leq P_1^{\mathrm{U}}}{}
		\addConstraint{\sum_{l=1}^{L} p_{2,l}^{\mathrm{U}}}{\leq P_2^{\mathrm{U}}}{},
	\end{maxi!}
	for $d,k=1,2,$ respectively. As $R_{1,l}^{{\mathrm{U}},(d)} + R_{2,l}^{{\mathrm{U}},(d)}$ is concave for all $d,l,$ problem (\ref{opt:u1}) is a convex optimization problem which can be solved efficiently with standard optimization techniques \cite{Boyd2004}. Furthermore, as (\ref{opt:u1}) is convex, the solution is globally optimal \cite{Rockafellar1970}.

	The achievable rate region of the proposed uplink scheme is given by the 2-dimensional convex polytope with corner points $(0,0), (0,R_2^{{\mathrm{U}},2\star}), (R_1^{{\mathrm{U}},2\star}, R_2^{{\mathrm{U}},2\star}), (R_1^{{\mathrm{U}},1\star}, R_2^{{\mathrm{U}},1\star}),$ and $(R_1^{{\mathrm{U}},1\star}, 0)$ \cite[Chap. 4]{Gamal2011}, where $R_1^{{\mathrm{U}},d\star} = \sum_{l=1}^{L}R_{1,l}^{{\mathrm{U}},d\star}$ and $R_2^{{\mathrm{U}},d\star} = \sum_{l=1}^{L}R_{2,l}^{{\mathrm{U}},d\star}, d=1,2,$ denote the optimal rates of users 1 and 2, respectively, which are obtained by solving (\ref{opt:u1}).

	\subsection{Downlink}
	\label{sec:dopt}
	Based in the achievable rates given in Section \ref{sec:drates}, the optimal power allocation for the proposed downlink scheme can be found by solving the following optimization problem:
	\begin{maxi!}
		{p_{k,l}^{\mathrm{D}} \geq 0\,\forall\,k,l}
		{\sum_{l=1}^{L} \left[\eta R_{1,l}^{{\mathrm{D}}} + (1-\eta) R_{2,l}^{{\mathrm{D}}}\right]}
		{\label{opt:d1}}
		{R_{\eta}^{{\mathrm{D}}\star}=}
		\addConstraint{\sum_{l=1}^{L} (p_{1,l}^{\mathrm{D}}+ p_{2,l}^{\mathrm{D}})}{\leq P_\mathrm{T}^{\mathrm{D}}}{},
	\end{maxi!}
	where $\eta \in [0,1]$ is a fixed weight which can be chosen to adjust the achievable rates of user 1 and 2 during power allocation \cite[Sec. 4]{WangGiannakis2011}.
	
	However, unlike the analogous optimization problem for the uplink, the problem in (\ref{opt:d1}) is non-convex due to the coupling of the transmit powers $p_{k,l}^{\mathrm{D}},k=1,2,l=1,\dots,M,$ in $R_{1,l}^{{\mathrm{D}}}.$ Hence, in the following, we simplify (\ref{opt:d1}) to obtain a tractable characterization of the achievable rate region.
	
	To this end, first, we rewrite (\ref{eqn:r1l}) as follows:
	\begin{align}
		R^{\mathrm{D}}_{1,l} = \begin{cases}
		R_{1,l}^{{\mathrm{D}}, (2)} & \text{$\frac{1}{\Pi_2}\left|\varrho^{(2)}_{l,l}\right|^2 +  \frac{1}{\Pi_1\Pi_2\sigma^2}\left|\varrho^{(2)}_{l,l}\right|^2\sum_{l'=l+1}^{M}p_{2,l'}^{\mathrm{D}}\left|\varrho^{(1)}_{l,l'}\right|^2 \leq \frac{1}{\Pi_1}\left|\varrho^{(1)}_{l,l}\right|^2$} \\
		R_{1,l}^{{\mathrm{D}}, (1)} & \text{otherwise,}
		\end{cases} \label{eqn:minsimp}
	\end{align}
	for $l=1,\dots,M,$ which follows directly from the definition of $R_{1,l}^{{\mathrm{D}}, (1)}$ and $R_{1,l}^{{\mathrm{D}}, (2)}$ in (\ref{eqn:r1lm}) and (\ref{eqn:r1lm2}), respectively. The condition in (\ref{eqn:minsimp}) depends on the power allocation coefficients $p_{2,l}^{\mathrm{D}},l=1,\dots,M.$ Additionally, the sum $\eta R_{1,l}^{{\mathrm{D}}, (2)} + (1-\eta) R_{2,l}^{{\mathrm{D}}}$ is concave while $\eta R_{1,l}^{{\mathrm{D}}, (1)} + (1-\eta)  R_{2,l}^{{\mathrm{D}}}$ is not concave. These two properties render a concave transformation of the objective function difficult. Hence, in the following, we characterize the global optimum of (\ref{opt:d1}) via an upper and a lower bound. As shown later in Section \ref{sec:dsim}, the gap between the upper and lower bounds is negligibly small.

	\subsubsection{An Upper and a Lower Bound}
	Based on (\ref{eqn:minsimp}), an upper ($\uparrow$) and a lower ($\downarrow$) bound for $R^{\mathrm{D}}_{1,l}, l=1,\dots,L,$ can be obtained as follows:
	\begin{align}
		R^{\mathrm{D\uparrow}}_{1,l} &= R_{1,l}^{{\mathrm{D}}, (1)}, \\
		R^{\mathrm{D\downarrow}}_{1,l} &= \begin{cases}
			0 & \text{if $l < M$ and $\frac{1}{\Pi_2}\left|\varrho^{(2)}_{l,l}\right|^2 \leq \frac{1}{\Pi_1}\left|\varrho^{(1)}_{l,l}\right|^2$} \\
			R_{1,l}^{{\mathrm{D}}, (2)} & \text{if $l = M$ and $\frac{1}{\Pi_2}\left|\varrho^{(2)}_{l,l}\right|^2 \leq \frac{1}{\Pi_1}\left|\varrho^{(1)}_{l,l}\right|^2$} \\
			R_{1,l}^{{\mathrm{D}}, (1)} & \text{otherwise.}
		\end{cases} \label{eqn:bounds}
	\end{align}
	Hence, new optimization problems (\ref{opt:d1}$\uparrow$) and (\ref{opt:d1}$\downarrow$), with optimal values denoted by $R_{\eta}^{\mathrm{D\uparrow}\star}$ and $R_{\eta}^{\mathrm{D\downarrow}\star},$ can be obtained by replacing $R_{1,l}^{{\mathrm{D}}}$ in (\ref{opt:d1}) with $R^{\mathrm{D\uparrow}}_{1,l}$ and $R^{\mathrm{D\downarrow}}_{1,l},l=1,\dots,L,$ respectively. We have, $R_{\eta}^{\mathrm{D\downarrow}\star} \leq R_{\eta}^{{\mathrm{D}}\star} \leq R_{\eta}^{\mathrm{D\uparrow}\star}\,\forall\,\eta.$
	
	In the following, first, we focus on solving (\ref{opt:d1}$\uparrow$). The solution of (\ref{opt:d1}$\downarrow$) is given in Section \ref{sec:lb}.
		
	\subsubsection{Problem Transformation}
	In order to simplify non-convex problem (\ref{opt:d1}$\uparrow$), we utilize the following theorem.
		
	\begin{theorem}
		\label{th:trans}
		The non-concave objective function
		\begin{align}
			R_{\eta}^{\mathrm{D\uparrow}} = \sum_{l=1}^{L} \left[\eta  R_{1,l}^{{\mathrm{D}}\uparrow} + (1-\eta) R_{2,l}^{{\mathrm{D}}}\right] \label{eqn:bc}
		\end{align}
		in $p_{k,l}^{\mathrm{D}},k=1,2,l=1,\dots,L,$ can be equivalently reformulated into concave function
		\begin{align}
			\scalebox{0.9}{\mbox{\ensuremath{\displaystyle R_{\eta}^{\prime\mathrm{D\uparrow}}}}} &= \scalebox{0.9}{\mbox{\ensuremath{\displaystyle \sum_{l=1}^{M}\left[(1-2\eta) \log_2\left(1 + \frac{1}{\Pi_1\sigma^2} \left|\varrho^{(1)}_{l,l}\right|^2 p_{1,l}'^{\mathrm{D}}\right) + \eta \log_2\left(1 + \frac{1}{\Pi_1\sigma^2} \left|\varrho^{(1)}_{l,l}\right|^2 p_{1,l}'^{\mathrm{D}} + \frac{1}{\Pi_2\sigma^2} \left|\varrho^{(2)}_{l,l}\right|^2 p_{2,l}'^{\mathrm{D}}\right)\right]}}} \nonumber\\
			&\quad \scalebox{0.85}{\mbox{\ensuremath{\displaystyle {}+  \eta\sum_{l=M+1}^{M+\bar{M}_1}\log_2\left(1 + \frac{1}{\Pi_1\sigma^2}p_{1,l}'^{\mathrm{D}}\left|\varrho^{(1)}_{l,l}\right|^2\right) + (1-\eta) \sum_{l=M+\bar{M}_1+1}^{M+\bar{M}_1+\bar{M}_2} \log_2\left(1 + \frac{1}{\Pi_2\sigma^2}p_{2,l}'^{\mathrm{D}}\left|\varrho^{(2)}_{l-\bar{M}_1,l-\bar{M}_1}\right|^2\right)}}},\label{eqn:mac1}
		\end{align}	
		if $\eta \leq 0.5,$ and into concave function
		\begin{align}
			\scalebox{0.87}{\mbox{\ensuremath{\displaystyle R_{\eta}^{\prime\mathrm{D\uparrow}}}}} &= \scalebox{0.87}{\mbox{\ensuremath{\displaystyle \sum_{l=1}^{M}\left[(2\eta-1) \log_2\left(1 + \frac{1}{\Pi_2\sigma^2} \left|\varrho^{(2)}_{l,l}\right|^2 p_{2,l}'^{\mathrm{D}}\right) + (1-\eta) \log_2\left(1 + \frac{1}{\Pi_1\sigma^2} \left|\varrho^{(1)}_{l,l}\right|^2 p_{1,l}'^{\mathrm{D}} + \frac{1}{\Pi_2\sigma^2} \left|\varrho^{(2)}_{l,l}\right|^2 p_{2,l}'^{\mathrm{D}}\right)\right]}}} \nonumber\\
			&\quad \scalebox{0.85}{\mbox{\ensuremath{\displaystyle {}+  \eta\sum_{l=M+1}^{M+\bar{M}_1}\log_2\left(1 + \frac{1}{\Pi_1\sigma^2}p_{1,l}'^{\mathrm{D}}\left|\varrho^{(1)}_{l,l}\right|^2\right) + (1-\eta) \sum_{l=M+\bar{M}_1+1}^{M+\bar{M}_1+\bar{M}_2} \log_2\left(1 + \frac{1}{\Pi_2\sigma^2}p_{2,l}'^{\mathrm{D}}\left|\varrho^{(2)}_{l-\bar{M}_1,l-\bar{M}_1}\right|^2\right)}}}, \label{eqn:mac2}
		\end{align}
		if $\eta > 0.5,$ such that $R_{\eta}^{\mathrm{D\uparrow}}=R_{\eta}^{\prime\mathrm{D\uparrow}}$ and
		\begin{align}
			\sum_{l=1}^{L} (p_{1,l}^{\mathrm{D}} + p_{2,l}^{\mathrm{D}}) = \sum_{l=1}^{L} (p_{1,l}'^{\mathrm{D}} + p_{2,l}'^{\mathrm{D}}) + \sum_{l'=2}^{M}\frac{p_{2,l'}'^{\mathrm{D}}}{\sigma^2 + \frac{1}{\Pi_1}p_{1,l'}'^{\mathrm{D}}\left|\varrho^{(1)}_{l',l'}\right|^2 }\sum_{l=1}^{l'-1}\frac{1}{\Pi_1} p_{1,l}'^{\mathrm{D}} \left|\varrho^{(1)}_{l,l'}\right|^2, \label{eqn:pe}
		\end{align}
		where
		\begin{align}
		p_{1,l}'^{\mathrm{D}} &= \begin{cases}
		\frac{p_{1,l}^{\mathrm{D}}\sigma^2}{\sigma^2 + \frac{1}{\Pi_1}\sum_{l'=l}^{M}p_{2,l'}^{\mathrm{D}}\left|\varrho^{(1)}_{l,l'}\right|^2} & \text{if $l \leq M$} \\
		p_{1,l}^{\mathrm{D}} & \text{otherwise,}
		\end{cases} \label{eqn:p1t}\\
		p_{2,l}'^{\mathrm{D}} &= \begin{cases}
		\frac{p_{2,l}^{\mathrm{D}}}{\sigma^2}\left(\sigma^2 + \frac{1}{\Pi_1} p_{1,l}'^{\mathrm{D}}\left|\varrho^{(1)}_{l,l}\right|^2\right) & \text{if $l \leq M$} \\
		p_{2,l}^{\mathrm{D}} & \text{otherwise.}
		\end{cases} \label{eqn:p2t}
		\end{align}
	\end{theorem}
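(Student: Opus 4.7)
The plan is to establish the theorem by a pointwise BC-to-MAC-type substitution combined with careful accounting of the total transmit power, in the spirit of \cite{Jindal2004}. For each $l \in \{1,\dots,M\}$, set $g_k \coloneqq |\varrho^{(k)}_{l,l}|^2/(\Pi_k \sigma^2)$. Two pointwise algebraic identities will drive the argument: substituting the definition of $p_{1,l}'^{\mathrm{D}}$ directly into $g_1 p_{1,l}'^{\mathrm{D}}$ reproduces the SINR appearing in $R_{1,l}^{{\mathrm{D}},(1)}$, giving $\log_2(1+g_1 p_{1,l}'^{\mathrm{D}}) = R_{1,l}^{{\mathrm{D}},(1)}$; and the definition of $p_{2,l}'^{\mathrm{D}}$ is equivalent to the factorization $1 + g_1 p_{1,l}'^{\mathrm{D}} + g_2 p_{2,l}'^{\mathrm{D}} = (1+g_1 p_{1,l}'^{\mathrm{D}})(1 + \tfrac{1}{\Pi_2\sigma^2}|\varrho^{(2)}_{l,l}|^2 p_{2,l}^{\mathrm{D}})$, whose logarithm is the MAC-sum-rate-type identity $\log_2(1 + g_1 p_{1,l}'^{\mathrm{D}} + g_2 p_{2,l}'^{\mathrm{D}}) = R_{1,l}^{{\mathrm{D}},(1)} + R_{2,l}^{\mathrm{D}}$.

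Using these two identities, I would reorganize the per-symbol BC-weighted rate $\eta R_{1,l}^{{\mathrm{D}}\uparrow} + (1-\eta) R_{2,l}^{\mathrm{D}}$ as a non-negative linear combination of an ``interference-free'' single-user log and the MAC sum-rate log. The case split on $\eta$ is precisely what guarantees that both coefficients remain non-negative: for $\eta \le 0.5$ the first stated concave form applies, and for $\eta > 0.5$ the second applies, with the ``alone'' log swapping between users (reflecting the two corner points of the dual MAC pentagon). In both cases the resulting summand is concave in $(p_{1,l}'^{\mathrm{D}}, p_{2,l}'^{\mathrm{D}})$ since $\log_2$ of a non-negative affine function is concave and concavity is preserved under non-negative linear combinations. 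For streams $l \in \{M+1,\dots,L\}$ the substitutions act as the identity and the rates are already interference-free and concave, so no further work is required there.

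The most delicate step, and the expected main obstacle, is the total-power identity. Inverting the definition of $p_{1,l}'^{\mathrm{D}}$ gives $p_{1,l}^{\mathrm{D}} = p_{1,l}'^{\mathrm{D}}\bigl(1 + \tfrac{1}{\Pi_1\sigma^2}\sum_{l'=l}^{M} p_{2,l'}^{\mathrm{D}}|\varrho^{(1)}_{l,l'}|^2\bigr)$. Summing $p_{1,l}^{\mathrm{D}} + p_{2,l}^{\mathrm{D}}$ over $l=1,\dots,M$ and separating the diagonal contribution ($l'=l$) from the off-diagonal contributions ($l'>l$), the diagonal piece telescopes with $p_{2,l}^{\mathrm{D}}$ via the definition of $p_{2,l}'^{\mathrm{D}}$ to yield $p_{1,l}'^{\mathrm{D}} + p_{2,l}'^{\mathrm{D}}$. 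The off-diagonal contributions, after swapping the order of summation from $\sum_{l=1}^{M-1}\sum_{l'=l+1}^{M}$ to $\sum_{l'=2}^{M}\sum_{l=1}^{l'-1}$ and re-expressing $p_{2,l'}^{\mathrm{D}}$ through the substitution for $p_{2,l'}'^{\mathrm{D}}$, produce exactly the correction term on the right-hand side of the stated identity. Since the off-diagonal entries $\varrho^{(1)}_{l,l'}$ with $l<l'$ couple the substitution for $p_{1,l}'^{\mathrm{D}}$ to all $p_{2,l'}^{\mathrm{D}}$ with $l'\ge l$, this off-diagonal bookkeeping is the non-trivial part; once it is carried out, concavity of $R_{\eta}^{\prime\mathrm{D\uparrow}}$ and its equality with $R_{\eta}^{\mathrm{D\uparrow}}$ follow directly from summing the per-symbol identities.
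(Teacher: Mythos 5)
Your two pointwise identities are exactly right and are precisely what the paper's substitution step establishes: plugging (\ref{eqn:p1t}) into the SINR reproduces $R_{1,l}^{{\mathrm{D}},(1)}$, and (\ref{eqn:p2t}) is equivalent to the factorization $1+g_1p_{1,l}'^{\mathrm{D}}+g_2p_{2,l}'^{\mathrm{D}}=(1+g_1p_{1,l}'^{\mathrm{D}})\bigl(1+\tfrac{1}{\Pi_2\sigma^2}|\varrho^{(2)}_{l,l}|^2p_{2,l}^{\mathrm{D}}\bigr)$, whose logarithm is the MAC sum-rate identity. Your handling of the power constraint --- inverting (\ref{eqn:p1t}), splitting the diagonal ($l'=l$) from the off-diagonal ($l'>l$) contributions, telescoping the diagonal part to $p_{1,l}'^{\mathrm{D}}+p_{2,l}'^{\mathrm{D}}$, and swapping the order of summation --- is also exactly the computation carried out in the paper's appendix.

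The gap is in the middle step. Direct rearrangement of $\eta R_{1,l}^{{\mathrm{D}}\uparrow}+(1-\eta)R_{2,l}^{\mathrm{D}}$ using your two identities yields $(2\eta-1)\log_2(1+g_1p_{1,l}'^{\mathrm{D}})+(1-\eta)\log_2(1+g_1p_{1,l}'^{\mathrm{D}}+g_2p_{2,l}'^{\mathrm{D}})$ and nothing else: this is a non-negative combination only on one side of $\eta=0.5$, and its ``alone'' logarithm always belongs to user 1. The other stated concave form, whose alone logarithm belongs to user 2, is the weighted rate evaluated at the \emph{other} corner of the dual MAC pentagon, i.e., at a genuinely different rate pair with the opposite SIC order; it cannot be reached by rearranging the same expression, so ``the alone log swapping between users'' is not a pointwise algebraic identity. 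The equivalence for the remaining $\eta$-range holds only in the sense of the maximization over the dual MAC region --- the weighted sum rate is maximized at the corner where the more heavily weighted user is decoded last --- and this is exactly the ingredient the paper imports from \cite[Theorem 1]{Liu2008} after observing that the substituted objective is the weighted sum rate of $L$ parallel SISO-MACs under a sum power constraint. You need either to invoke such a corner-point optimality result or to prove it; as written, your reorganization claim fails for one of the two $\eta$-regimes.
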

	\begin{proof}
		Please refer Appendix \ref{app:trans}.
	\end{proof}

	\begin{remark}
		Note that $p_{k,l}^{\mathrm{D}},k=1,2,l=1,\dots,L,$ can be recovered recursively from $p_{k,l}'^{\mathrm{D}},k=1,2,l=1,\dots,L,$ based on (\ref{eqn:p1t}) and  (\ref{eqn:p2t}).
	\end{remark}

	Now, based on Theorem \ref{th:trans}, (\ref{opt:d1}$\uparrow$) can be rewritten as follows.
	\begin{maxi!}
		{p_{k,l}'^{\mathrm{D}} \geq 0\,\forall\,k,l}
		{R_{\eta}^{\prime\mathrm{D\uparrow}}}
		{\label{opt:d1r}}
		{R_{\eta}^{\mathrm{D\uparrow}\star}=}
		\addConstraint{g_0 + \underbrace{\sum_{l'=2}^{M}g_{1,l'}g_{2,l'}g_{3,l'}}_{\coloneqq P_L}\label{optc:d1r}}{\leq P_\mathrm{T}}{},
	\end{maxi!}
	where
	\begin{align}
		g_0 &= \sum_{l=1}^{L} (p_{1,l}'^{\mathrm{D}}+ p_{2,l}'^{\mathrm{D}}), \\
		g_{1,l'} &= p_{2,l'}'^{\mathrm{D}},\quad	g_{2,l'} = \sum_{l=1}^{l'-1}\frac{1}{\Pi_1}  p_{1,l}'^{\mathrm{D}} \left|\varrho^{(1)}_{l,l'}\right|^2, \quad g_{3,l'} = \frac{1}{\sigma^2 + \frac{1}{\Pi_1}p_{1,l'}'^{\mathrm{D}}\left|\varrho^{(1)}_{l',l'}\right|^2}.
	\end{align}
	
	\begin{remark}
		Based on Theorem \ref{th:trans}, the BC in (\ref{opt:d1}$\uparrow$), in which the $l$-th symbol of user 1, $l=1,\dots,M,$ experiences inter-user-interference from all $l', l \leq l' \leq M,$ symbols of user 2, is transformed into a MAC in which the $l$-th symbol of user 1 experiences inter-user-interference only from the $l$-th symbol of user 2. However, communication over the resulting MAC incurs a power penalty of $P_L \geq 0.$
	\end{remark}

	\begin{remark}
		As $P_L$ is inversely proportional to $\Pi_1,$ the power penalty is insignificant when user 1 is located sufficiently far from the BS. This is the most relevant case for MIMO-NOMA as the loss incurred in the achievable rate if user 1 is not able to decode and cancel the interference from the symbols of user 2 becomes negligible.
	\end{remark}

	\begin{remark}
		The BC-MAC transformation in (\ref{eqn:p1t}) and (\ref{eqn:p2t}) is analogous to that introduced in \cite{Jindal2004} but is adapted to the problem at hand.
	\end{remark}

	\subsubsection{A Globally Optimal Solution}
	\label{sec:gopt}
	Problem (\ref{opt:d1r}) has a concave objective function and a monotonically increasing constraint function (\ref{optc:d1r}) in each auxiliary variable $g_0,g_{s,l'},s=1,2,3,l'=2,\dots,M,$ which are convex functions in $p_{k,l}'^{\mathrm{D}},k=1,2,l=1,\dots,M.$ Hence, the globally optimal solution of (\ref{opt:d1r}) lies on the boundary of the feasible set defined by (\ref{optc:d1r}). Therefore, (\ref{opt:d1r}) is a monotonic optimization problem and, in the following, we present an efficient algorithm for solving it based on \cite{Tuy2000}. The proposed approach utilizes a branch-and-bound iterative technique based on polyblocks \cite[Sec. 4]{Tuy2000a} for solving problems with a single non-convex but monotonically increasing constraint.
	
	We provide an outline of the algorithm as follows. According to \cite{Tuy2000}, the boundary of the feasible set defined by (\ref{optc:d1r}) is first loosely upper bounded via a polyblock utilizing feasible initial values for the individual auxiliary variables $g_0, g_{s,l'},s=1,2,3,l'=2,\dots,M.$ Subsequently, the upper bound is progressively tightened by shrinking the initial polyblock iteratively. In each step of the iteration, the polyblock corner point which maximizes (\ref{opt:d1r}) is chosen for further tightening. When the chosen corner point already lies on the boundary of the feasible set defined by (\ref{optc:d1r}), whereby no further tightening is possible, the algorithm is deemed to have converged. In the following, the algorithm is presented in detail.

	First, we initialize a $3(M-1)+1$ dimensional tuple of auxiliary variables $\boldsymbol{y} = (y_0, y_{s,l'},s=1,2,3,l'=2,\dots,M)$ to obtain an upper bound for $g_0, g_{s,l'},s=1,2,3,l'=2,\dots,M,$ respectively, as follows\footnote{The presented upper bound is exemplary. Convergence of the algorithm can be sped up by utilizing tighter upper bounds obtained with more sophisticated algorithms of higher complexity, see \cite[Prop. 2]{Tuy2000}.}:
	\begin{align}
		y_0^{(0)} = P_\mathrm{T}^{\mathrm{D}}, \quad y_{1,l'}^{(0)} = P_\mathrm{T}^{\mathrm{D}}, \quad y_{2,l'}^{(0)} = \mathrm{max}\left\{\frac{1}{\Pi_1} \left|\varrho^{(1)}_{l,l'}\right|^2, l=1,\dots,M\right\} P_\mathrm{T}^{\mathrm{D}}, \quad
		y_{3,l'}^{(0)} = \frac{1}{\sigma^2}, \label{eqn:y0}
	\end{align}
	which yields $\boldsymbol{y}^{(0)}.$ Next, we initialize the set $\mathcal{T},$ which contains the polyblock corner points, to $\mathcal{T}^{(0)} = \{\boldsymbol{y}^{(0)}\},$ and via line search, find a point 
	\begin{align}
	\boldsymbol{z}^{(0)} = \mu^{(0)} \boldsymbol{y}^{(0)}, \label{eqn:linesearch}
	\end{align}
	such that $\mu^{(0)} = \arg\max_{\alpha}\{\alpha \mid \alpha \geq 0 \text{ and } \psi(\alpha \boldsymbol{y}^{(0)}) \leq P_\mathrm{T}^{\mathrm{D}}\},$ where
	\begin{align}
		\psi(\boldsymbol{y}) = y_0 + \sum_{l'=2}^{M}y_{1,l'}y_{2,l'}y_{3,l'}.
	\end{align}
	
	In each iteration, $n=1,2,\dots,$ the set $\mathcal{T}^{(n-1)} \setminus \boldsymbol{y}^{(n-1)}$ is extended with points $\boldsymbol{y}^{(n-1),i}, i=1,\dots,3(M-1)+1,$ which are obtained by replacing the $i$-th element of $\boldsymbol{y}^{(n-1)}$ with that of $\boldsymbol{z}^{(n-1)},$ to obtain $\mathcal{T}^{(n)},$ i.e., 
	\begin{align}
		\mathcal{T}^{(n)} = \mathcal{T}^{(n-1)} \setminus \boldsymbol{y}^{(n-1)} \cup \{\boldsymbol{y}^{(n-1),i}, i=1,\dots,3(M-1)+1\}. \label{eqn:extension}
	\end{align}
	Set $\mathcal{T}^{(n)}$ contains polyblock corner points which yield a tighter upper bound for the boundary of (\ref{optc:d1r}) compared to $\mathcal{T}^{(n-1)}.$ Next, $\boldsymbol{y}^{(n)}$ is chosen as the point $\boldsymbol{y}$ in $\mathcal{T}^{(n)}$ fulfilling
	\begin{align}
		\boldsymbol{y}^{(n)} = \arg\max_{\boldsymbol{y}}\{R_{\eta}^{\prime\mathrm{D\uparrow\star}}(\boldsymbol{y}),\,\forall\,\boldsymbol{y} \in \mathcal{T}^{(n)}\}, \label{eqn:ymaxi}
	\end{align}
	where $R_{\eta}^{\prime\mathrm{D\uparrow\star}}(\boldsymbol{y})$ is the maximum of the convex optimization problem 
		\begin{maxi!}
		{p_{k,l}'^{\mathrm{D}} \geq 0\,\forall\,k,l}
		{R_{\eta}^{\prime\mathrm{D\uparrow}}}
		{\label{opt:d1ic}}
		{R_{\eta}^{\prime\mathrm{D\uparrow\star}}(\boldsymbol{y}) = }
		\addConstraint{g_0}{\leq y_0}{}
		\addConstraint{g_{s,l'}}{\leq y_{s,l'}}{\quad\forall\,s=1,2,3,l'=2,\dots,M,}
	\end{maxi!}
	with parameter $\boldsymbol{y}$ and optimization variables $p_{k,l}'^{\mathrm{D}},k=1,2,l=1,\dots,L,$ which can be obtained using standard convex optimization techniques \cite{Boyd2004}. Boundary point $\boldsymbol{z}^{(n)} = \mu^{(n)} \boldsymbol{y}^{(n)}$ is obtained via a line search, as in (\ref{eqn:linesearch}). Once the optimal value of (\ref{opt:d1ic}) with $\boldsymbol{z}^{(n)}$ and $\boldsymbol{y}^{(n)},$ i.e., $R_{\eta}^{\prime\mathrm{D\uparrow\star}}(\boldsymbol{z}^{(n)})$ and $R_{\eta}^{\prime\mathrm{D\uparrow\star}}(\boldsymbol{y}^{(n)}),$ have converged, upto a numerical tolerance $\epsilon,$ the iterations are stopped. Upon convergence, the solution to problem (\ref{opt:d1ic}) with $\boldsymbol{z}^{(n)},$ i.e., $R_{\eta}^{\prime\mathrm{D\uparrow\star}}(\boldsymbol{z}^{(n)})$, is a globally optimal solution of (\ref{opt:d1r}) \cite[Th. 1]{Tuy2000}. The algorithm is summarized in Algorithm  \ref{alg:g1}.
	
	\begin{figure}
		\begin{algorithm}[H]
			\small
			\begin{algorithmic}[1]
				\STATE {Initialize $\boldsymbol{y}^{(0)}$ as in (\ref{eqn:y0}), numerical tolerance $\epsilon,$ and iteration index $n=0.$}
				\STATE {Set $\mathcal{T}^{(0)} = \{\boldsymbol{y}^{(0)}\},$ and obtain $\boldsymbol{z}^{(0)}$ through line search as in (\ref{eqn:linesearch}).}
				\REPEAT	
				\STATE {$n \leftarrow n + 1$}
				\STATE {Compute $\mathcal{T}^{(n)} = \mathcal{T}^{(n-1)} \setminus \boldsymbol{y}^{(n-1)} \cup \{\boldsymbol{y}^{(n-1),i}, i=1,\dots,3(M-1)+1\}$ as in (\ref{eqn:extension}).}
				\STATE {Solve $\boldsymbol{y}^{(n)} = \arg\max_{\boldsymbol{y}}\{R_{\eta}^{\prime\mathrm{D\uparrow\star}}(\boldsymbol{y})\,\forall\,\boldsymbol{y} \in \mathcal{T}^{(n)}\}$ as in (\ref{eqn:ymaxi}).}
				\STATE {Obtain $\boldsymbol{z}^{(n)}$ through line search analogous to (\ref{eqn:linesearch}).}
				\UNTIL {$|R_{\eta}^{\prime\mathrm{D\uparrow\star}}(\boldsymbol{y}^{(n)}) - R_{\eta}^{\prime\mathrm{D\uparrow\star}}(\boldsymbol{z}^{(n)})| < \epsilon$}
				\STATE{Return solution $p_1'^{\mathrm{D}},p_2'^{\mathrm{D}},p_{k,l}'^{\mathrm{D}},k=1,2,l=1,\dots,L,$ to $R_{\eta}^{\prime\mathrm{D\uparrow\star}}(\boldsymbol{z}^{(n)}),$ obtained by solving (\ref{opt:d1ic}), as the power allocation.}
			\end{algorithmic}
			\caption{Globally optimal algorithm for solving (\ref{opt:d1r}).}
			\label{alg:g1}
		\end{algorithm}
	\end{figure}
	
	\begin{remark}
		In each iteration of the algorithm, (\ref{opt:d1ic}) is solved $3(M-1)+1$ times. Furthermore, the set $\mathcal{T}^{(n)}$ grows by $3(M-1)$ polyblock corner points in every iteration. In order to speed up convergence and to limit the size of $\mathcal{T}^{(n)},$ pruning techniques, such as removing improper polyblock vertices \cite[Sec. 3]{Tuy2000}, and restarting \cite[Secs. 4 and 5]{Tuy2000} can be utilized.
	\end{remark}

	\subsubsection{The Lower Bound}
	\label{sec:lb}
	Based on the lower bound given in (\ref{eqn:bounds}), the optimal solution of (\ref{opt:d1}$\downarrow$), $R_{\eta}^{\prime\mathrm{D\downarrow\star}},$ can be obtained by solving optimization problem
	\begin{maxi!}
		{p_{k,l}^{\mathrm{D}} \geq 0\,\forall\,k,l}
		{\sum_{l=1}^{L} \left[\eta R_{1,l}^{{\mathrm{D}}\downarrow} + (1-\eta) R_{2,l}^{{\mathrm{D}}}\right]}
		{\label{opt:d2}}
		{R_{\eta}^{\mathrm{D\downarrow\star}}=}
		\addConstraint{\sum_{l=1}^{L} (p_{1,l}^{\mathrm{D}}+ p_{2,l}^{\mathrm{D}})\label{optc:d2}}{\leq P_\mathrm{T}^{\mathrm{D}}}{},
	\end{maxi!}
	analogously to (\ref{opt:d1}$\uparrow$).

	Lastly, the upper and lower bounds for the outer boundary of the maximum rate region for downlink ST MIMO-NOMA are obtained by solving (\ref{opt:d1}$\uparrow$) and (\ref{opt:d1}$\downarrow$) for different $\eta \in [0,1],$ respectively, and taking the convex closure \cite{Rockafellar1970} of the obtained rate regions.
	\section{Simulation Results}
	\label{sec:sim}	
	In this section, we compare the ergodic achievable rate regions of the proposed uplink and downlink ST MIMO-NOMA schemes with those of existing MIMO-NOMA schemes and OMA. For both uplink and downlink transmission, we assume that the first and the second user are located at distances $d_1 = 250~\text{m}$ and $d_2 = 50~\text{m}$ from the BS, respectively. The path loss is modeled as $\Pi_k = d_k^2,$ i.e., $\Pi_1 = 250^2$ and $\Pi_2 = 50^2,$ and the noise variance is set as $\sigma^2 = -35~\text{dBm}.$ The elements of the channel matrices $\boldsymbol{H}_k \in \mathbb{C}^{M_k\times N}, k=1,2,$ are drawn from independent and identically distributed (i.i.d.) random variables $[\boldsymbol{H}_k]_{ij} \sim \mathcal{CN}(0,1),i=1,\dots,M_k,$ $j=1,\dots,N,$ $k=1,2.$ For uplink transmission, the maximum transmit powers of the users are set to $P_1^{\mathrm{U}} = 30 \text{ dBm},$ and $P_2^{\mathrm{U}} = 20 \text{ dBm},$ and for downlink transmission, a maximum transmit power of $P_\mathrm{T}^{\mathrm{D}} = 30 \text{ dBm}$ at the BS is adopted. The ergodic achievable rate regions of the considered schemes are computed by averaging the corresponding achievable rates over $10^4$ realizations of $\boldsymbol{H}_1$ and $\boldsymbol{H}_2,$ resulting in a 99\% confidence interval of $\pm 10^{-2}$ for the estimated ergodic achievable rates.

	\subsection{Uplink}
	\label{sec:simuplink}
	In the following figures, the ergodic achievable rate region of the proposed uplink (UL) ST MIMO-NOMA scheme is compared with those of ZF MIMO-NOMA \cite{Chen2016, Chen2016a, Higuchi2015}, UL GSVD MIMO-NOMA \cite{Ma2016}, SVD MIMO-NOMA, and OMA as well as the MIMO-MAC upper bound \cite{Yu2001}. The ergodic achievable rate region of the proposed scheme is obtained as described in Section \ref{sec:uopt}. Results for UL GSVD MIMO-NOMA \cite{Ma2016} are presented for the case $M_1=M_2=N.$ For SVD MIMO-NOMA, SVD-based precoding and detection matrices \cite{Gamal2011} are utilized for both users in order to diagonalize their MIMO channels. For decoding order (D-1-2), the symbols of the first user are decoded element-by-element treating the symbols of the second user as noise. Next, SIC is performed to eliminate the interference caused by the decoded symbols and the signal of the second user is subsequently diagonalized and decoded. For decoding order (D-2-1), a similar procedure is used. For OMA, time division multiple access (TDMA) with time fractions $\tau \in [0,1]$ and $(1-\tau)$ allocated to the first and the second user, respectively, is adopted. Furthermore, for OMA, the transmit powers of the users are normalized as $P_1'^{\mathrm{U}} = P_1^{\mathrm{U}}/\tau$ and $P_2'^{\mathrm{U}} = P_2^{\mathrm{U}}/(1-\tau)$ in order to obtain average powers $P_1^{\mathrm{U}}$ and $P_2^{\mathrm{U}}$ over the entire time slot.
	
	Figures \ref{fig:u446} and \ref{fig:u224} show the ergodic achievable rate regions for the case $M_1+M_2 > N$ (in particular, $M_1=M_2=4,N=6$), where ST MIMO-NOMA transmits $\bar{M}_1$ and $\bar{M}_2$ symbols inter-user-interference free by exploiting the null spaces of the MIMO channels of the users, and the case $M_1+M_2 = N$ (in particular, $M_1=M_2=2,N=4$), where the BS has sufficient DoFs to perform spatial orthogonalization, respectively. From the figures, we observe that, in both cases, the proposed ST MIMO-NOMA scheme significantly outperforms SVD and ZF MIMO-NOMA and OMA, and has a small gap to the MIMO-MAC upper bound. This small gap is expected because, in order to reduce decoding complexity, the proposed scheme cancels the received signal components that correspond to the off-diagonal elements of the triangularized channel matrices instead of exploiting them for decoding. The improved performance of ST MIMO-NOMA compared to SVD and ZF MIMO-NOMA is attributed to the fact that, in ST MIMO-NOMA, $\bar{M}_2$ symbols of user 2 and $\bar{M}_1$ symbols of user 1 experience no inter-user-interference.
	
	Next, in Figure \ref{fig:u444}, we consider the case $M_1=M_2=N=4,$ where both MIMO channel matrices have full rank. In this case, we observe that ST MIMO-NOMA exhibits a larger gap to the MIMO-MAC upper bound owing to the cancellation of the received signal components corresponding to the off-diagonal elements which, unlike the previous cases, cannot be partially compensated because the MIMO channel matrices do not have null spaces. Nevertheless, the proposed ST MIMO-NOMA scheme outperforms ZF MIMO-NOMA, UL GSVD MIMO-NOMA, and OMA for most user rates. SVD MIMO-NOMA has a marginally larger ergodic achievable rate region compared to ST MIMO-NOMA owing to the use of SVD-based precoding, which yields a better performance for user rates close to the single-user (SU)-MIMO rates. Nevertheless, for rates close to the SU-MIMO rates, the performance of the proposed ST MIMO-NOMA can be enhanced by utilizing a hybrid scheme that performs time sharing between OMA and the proposed ST MIMO-NOMA scheme.
	
	Lastly, Figure \ref{fig:u244} considers the asymmetric case $\bar{M}_1 = 0, \bar{M}_2 > 0$ (in particular, $M_1=2,M_2=4,N=6)$, where user 1 cannot benefit from the null space of the MIMO channel matrix of user 2. From the figure, we observe that for (D-2-1), the proposed ST MIMO-NOMA scheme has a small gap to the MIMO-MAC upper bound, whereas for (D-1-2), the gap for the proposed scheme is larger. This is because, for (D-2-1), $\bar{M}_2$ symbols of user 2, $s_{2,M+1},\dots,s_{2,M+\bar{M}_2},$ experience no inter-user-interference from user 1. On the other hand, for (D-1-2), all symbols of user 1, $s_{1,1},\dots,s_{1,M},$ experience degradation due to inter-user-interference from the symbols of user 2, as $\bar{M}_1=0.$

	\begin{figure*}
		\begin{minipage}[t]{0.48\textwidth}
			\centering
			\includegraphics[width=0.9\textwidth]{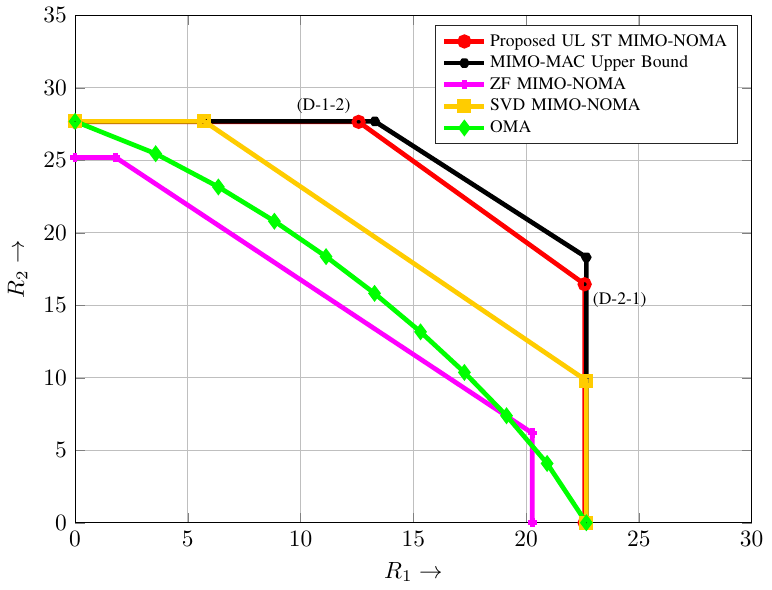}
			\caption{Uplink ergodic achievable rate region for $M_1,M_2=4, N=6, P_1^{\mathrm{U}} = 30 \text{ dBm},$ and $P_2^{\mathrm{U}} = 20 \text{ dBm}.$}
			\label{fig:u446}
	\end{minipage}%
	\hspace{0.04\textwidth}%
	\begin{minipage}[t]{0.48\textwidth}
			\centering
			\includegraphics[width=0.9\textwidth]{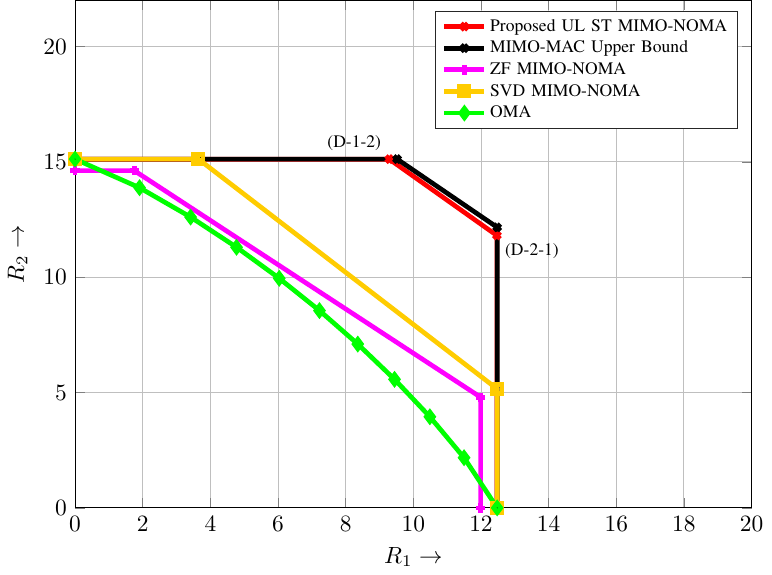}
			\caption{Uplink ergodic achievable rate region for $M_1,M_2=2, N=4, P_1^{\mathrm{U}} = 30 \text{ dBm},$ and $P_2^{\mathrm{U}} = 20 \text{ dBm}.$}
			\label{fig:u224}
			\vspace{0.5cm}
	\end{minipage}

	\begin{minipage}[t]{0.48\textwidth}
		\centering
		\includegraphics[width=0.9\textwidth]{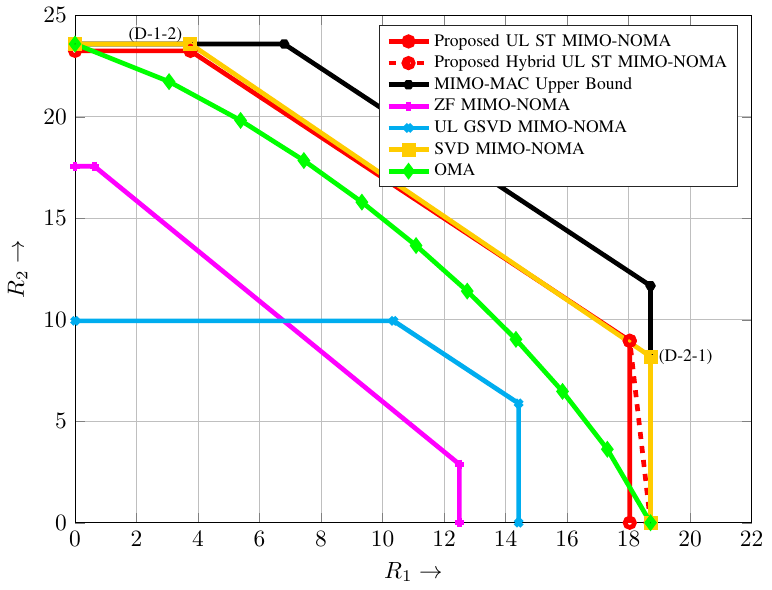}
		\caption{Uplink ergodic achievable rate region for $M_1,M_2,N=4, P_1^{\mathrm{U}} = 30 \text{ dBm},$ and $P_2^{\mathrm{U}} = 20 \text{ dBm}.$}
		\label{fig:u444}
	\end{minipage}%
	\hspace{0.04\textwidth}%
	\begin{minipage}[t]{0.48\textwidth}
		\centering
		\includegraphics[width=0.9\textwidth]{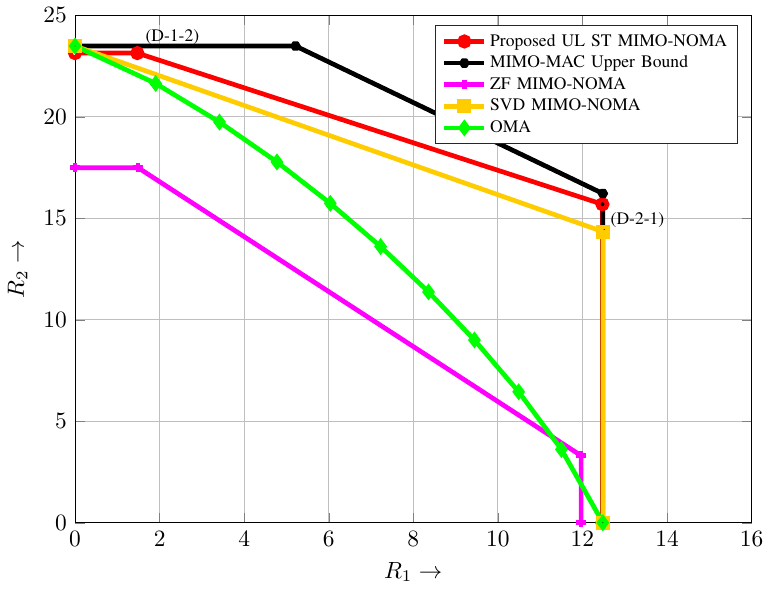}
		\caption{Uplink ergodic achievable rate region for $M_1=2,M_2,N=4, P_1^{\mathrm{U}} = 30 \text{ dBm},$ and $P_2^{\mathrm{U}} = 20 \text{ dBm}.$}
		\label{fig:u244}
		\vspace{0.5cm}
	\end{minipage}
	\end{figure*}
	
	\subsection{Downlink}
	\label{sec:dsim}
	For downlink (DL) transmission, we compare the ergodic achievable rate region of the proposed DL ST MIMO-NOMA scheme in Section \ref{sec:propdownlink} with those obtained for dirty paper coding (DPC), SD MIMO-NOMA  in \cite{Krishnamoorthy2020} and \cite{Chen2019}, and OMA. The ergodic achievable rate region of the proposed scheme is characterized via the upper and lower bounds (UB and LB) from Section \ref{sec:dopt}, the DPC upper bound is obtained by exploiting the MIMO BC-MAC duality \cite{Vishwanath2003}, and the ergodic achievable rate regions of the SD MIMO-NOMA schemes in \cite{Krishnamoorthy2020} and \cite{Chen2019}, which utilize GSVD and user-assisted simultaneous diagonalization (UA-SD) for simultaneously diagonalizing the MIMO channels of the users, are obtained via power allocation analogous to the proposed scheme but with a different BC-MAC transformation based on \cite{Jindal2004}. The ergodic achievable rate region for OMA is obtained by time sharing between the SU-MIMO rates. We first note that in Figures \ref{fig:d446}-\ref{fig:d244}, the upper and lower bounds of the proposed ST MIMO-NOMA scheme coincide, thereby providing an accurate characterization of the corresponding rate region.
	
	Figure \ref{fig:d446} shows the ergodic achievable rate region for the case $M_1+M_2 > N,$ with $M_1,M_2=4,$ and $N=6.$ We observe that the ergodic achievable rate region of the proposed scheme has a small gap of about 2 bits per channel use (BPCU) to the DPC upper bound. Moreover, we observe that the proposed scheme outperforms OMA for a wide range of user rates, owing to its ability to exploit the $\bar{M}_2=2$ and $\bar{M}_1=2$ null space dimensions of the MIMO channel matrices of users 1 and 2 for inter-user-interference free communication of $\bar{M}_1$ and $\bar{M}_2$ symbols of users 1 and 2, respectively. Furthermore, the proposed ST MIMO-NOMA outperforms SD MIMO-NOMA as it avoids channel inversion\footnote{SD MIMO-NOMA performs channel inversion which results in an increased transmit power, see \cite{Krishnamoorthy2020}.}, c.f. \cite{Krishnamoorthy2020} and \cite{Chen2019}, which leads to an enhanced performance. OMA is superior for rate pairs close to SU-MIMO. The gap to the DPC upper bound and the performance loss compared to OMA for user rates close to SU-MIMO are caused by the cancellation of the received signal components corresponding to the off-diagonal elements of the triangularized channel matrix. The energy of these signal components cannot be exploited for decoding. However, as for uplink transmission, performance can be further improved by utilizing a hybrid scheme, shown exemplarily for this case, which performs time sharing between OMA and the proposed downlink ST MIMO-NOMA scheme.
	
	Figure \ref{fig:d224} shows the ergodic achievable rate region for the case $M_1+M_2 = N,$ with $M_1,M_2=2,$ and $N=4.$ From the figure, we note that in this case the SD precoding scheme in \cite{Krishnamoorthy2020} yields a slightly larger rate region compared to the proposed ST precoding scheme. This is because, for $M_1+M_2=N,$ both the proposed ST precoder and the SD precoder in \cite{Krishnamoorthy2020} exploit the null space of the MIMO channel matrices of the users to achieve inter-user-interference free communication. However, in the proposed scheme, as explained earlier, the received signal components corresponding to the off-diagonal elements of the triangularized channel matrix are not exploited for decoding, whereas the SD MIMO-NOMA scheme in \cite{Krishnamoorthy2020} is able to exploit all signal components of the diagonalized channel matrix leading to a marginally larger rate region. However, we note that the computational complexity of the proposed ST MIMO-NOMA precoder, which exploits the QR decomposition, is lower than that of the SD MIMO-NOMA precoder in \cite{Krishnamoorthy2020}, cf. Section \ref{sec:dcomp}. Furthermore, the proposed scheme outperforms OMA and the SD MIMO-NOMA scheme in \cite{Chen2019} for most user rates. The hybrid scheme (not shown) can also be utilized in this case to enhance the ergodic achievable rate region. Lastly, for the considered system parameters, the proposed scheme has a relatively large gap to the DPC upper bound compared to the other considered scenarios. This is because, in order to achieve inter-user-interference free communication, the proposed downlink ST MIMO-NOMA scheme utilizes only the null spaces of the MIMO channels of the users. The column spaces of the MIMO channels of the users, $\mathrm{col}\big(\boldsymbol{H}_1^\mathrm{H}\big) \cup \mathrm{col}\big(\boldsymbol{H}_2^\mathrm{H}\big),$ are not exploited. On the other hand, uplink ST MIMO-NOMA, cf. Figure \ref{fig:u224}, exploits both the null and the column spaces of the MIMO channels of the users, resulting in a smaller gap to the corresponding upper bound.

	In Figures \ref{fig:d444} and \ref{fig:d244}, we consider the cases $M_1=M_2=N=4,$ where the MIMO channels of both users have full rank, and $M_1=2, M_2=N=4,$ where the MIMO channel of user 1 has a null space of dimension $\bar{M}_2=2,$ and the MIMO channel of user 2 has full rank, i.e., $\bar{M}_1=0.$ For both scenarios, the proposed downlink ST MIMO-NOMA outperforms SD MIMO-NOMA in \cite{Chen2019} and \cite{Krishnamoorthy2020} and OMA, and has a small gap to the DPC upper bound.
	
	\begin{figure*}
	\begin{minipage}[t]{0.48\textwidth}
		\centering
		\includegraphics[width=0.9\textwidth]{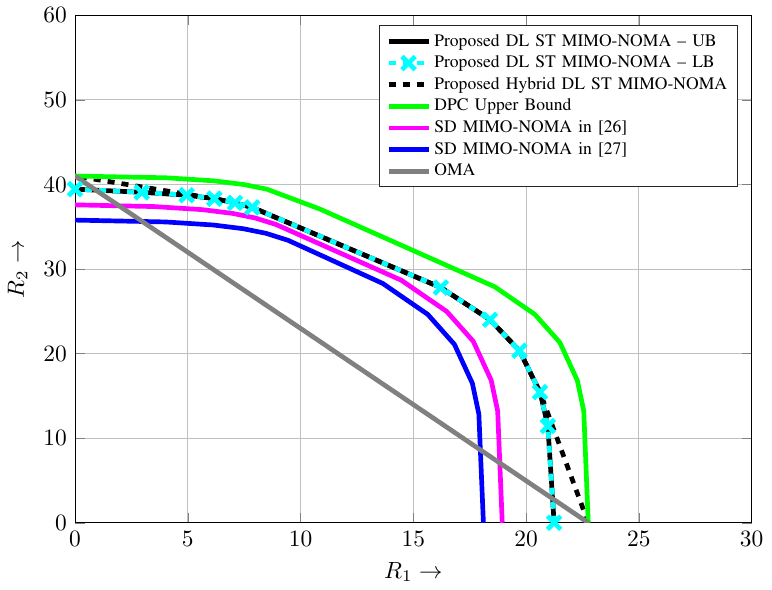}
		\caption{Downlink ergodic achievable rate region for $M_1,M_2=4,N=6,$ and $P_\mathrm{T}^{\mathrm{D}} = 30 \text{ dBm}.$}
		\label{fig:d446}
	\end{minipage}%
	\hspace{0.04\textwidth}%
	\begin{minipage}[t]{0.48\textwidth}
		\centering
		\includegraphics[width=0.9\textwidth]{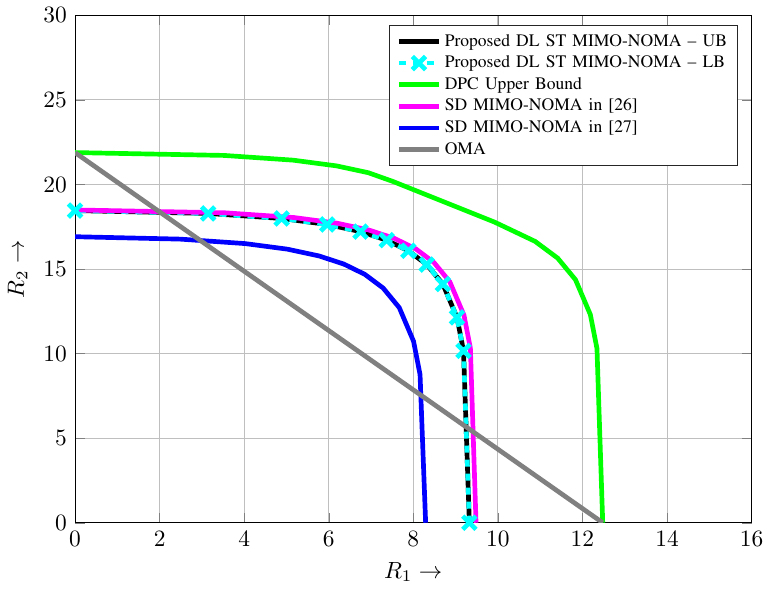}
		\caption{Downlink ergodic achievable rate region for $M_1,M_2=2,N=4,$ and $P_\mathrm{T}^{\mathrm{D}} = 30 \text{ dBm}.$}
		\label{fig:d224}
		\vspace{0.5cm}
	\end{minipage}
	
	\begin{minipage}[t]{0.48\textwidth}
		\centering
		\includegraphics[width=0.9\textwidth]{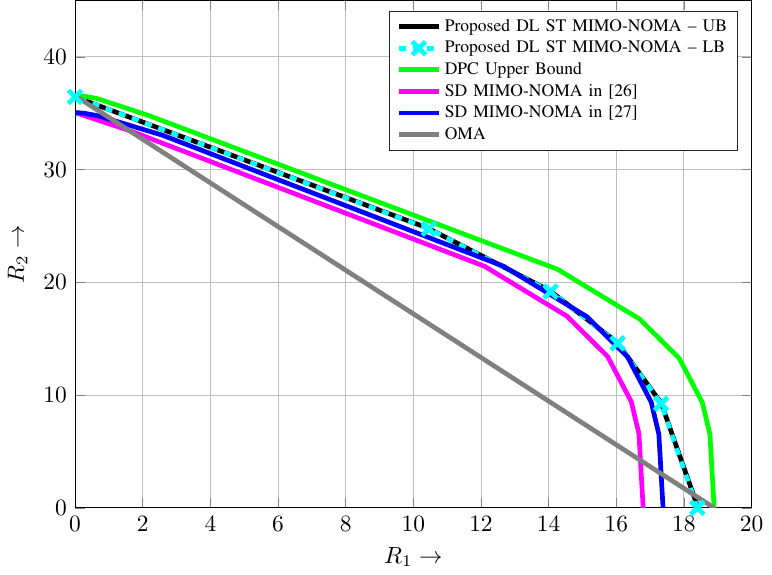}
		\caption{Downlink ergodic achievable rate region for $M_1,M_2,N=4,$ and $P_\mathrm{T}^{\mathrm{D}} = 30 \text{ dBm}.$}
		\label{fig:d444}
	\end{minipage}%
	\hspace{0.04\textwidth}%
	\begin{minipage}[t]{0.48\textwidth}
		\centering
		\includegraphics[width=0.9\textwidth]{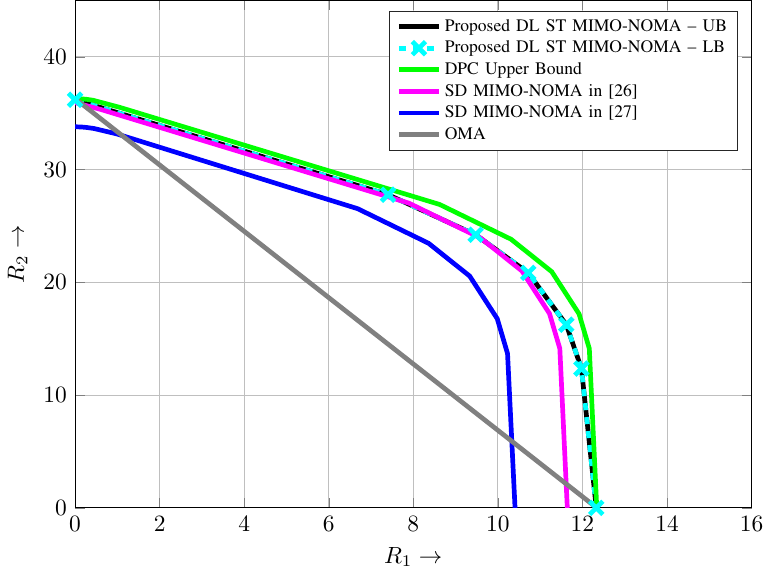}
		\caption{Downlink ergodic achievable rate region for $M_1=2,M_2,N=4,$ and $P_\mathrm{T}^{\mathrm{D}} = 30 \text{ dBm}.$}
		\label{fig:d244}
		\vspace{0.5cm}
	\end{minipage}
	\end{figure*}

	\section{Conclusion}
	\label{sec:con}
	We proposed novel uplink and downlink ST MIMO-NOMA precoding and decoding schemes that simultaneous triangularize the MIMO channel matrices of the users enabling low-complexity decoding, assuming self-interference cancellation at the receiver. The proposed uplink and downlink schemes exploit the null spaces of the MIMO channel matrices of the users to enable inter-user-interference free communication, and unlike SD MIMO-NOMA, avoid channel inversion at transmitter and receiver, which additionally enhances the ergodic rate performance. For uplink transmission, we characterized the maximum achievable rate region of the proposed ST MIMO-NOMA scheme utilizing convex optimization, and for downlink transmission, we exploited a BC-MAC transformation and an efficient polyblock outer approximation method. Computer simulations revealed that, for both uplink and downlink, the proposed ST MIMO-NOMA schemes perform close to the corresponding upper bounds and significantly outperform the considered baseline MIMO-NOMA schemes and OMA for most channel conditions and user rates. Further performance improvements were obtained with a hybrid scheme which performs time sharing between the proposed ST MIMO-NOMA schemes and SU-MIMO.
	
	The extension of the proposed ST MIMO-NOMA schemes to more than two users and their performance analysis for (a) practical modulation and coding schemes and (b) imperfect channel state information and imperfect SIC are interesting avenues for future research. Furthermore, the combination of ST MIMO-NOMA with PIC-based decoding is also a promising area for further study.
	\begin{appendices}
	\renewcommand{\thesection}{\Alph{section}}
	\renewcommand{\thesubsection}{\thesection.\arabic{subsection}}
	\renewcommand{\thesectiondis}[2]{\Alph{section}:}
	\renewcommand{\thesubsectiondis}{\thesection.\arabic{subsection}:}
	\section{Proofs}
	\label{app:proofs}
	\subsection{Proof of Theorem \ref{th:stu}}
	\label{app:stu}
	If $q_1 \geq p,$ let $\boldsymbol{A}_1^\mathrm{H} = \check{\boldsymbol{\mathcal{Q}}}_1\begin{bmatrix}
	\boldsymbol{0} \\ \check{\boldsymbol{R}}_1^\mathrm{H}
	\end{bmatrix}$ be the QL\footnote{The QL decomposition of a matrix, which decomposes the matrix into the product of a unitary matrix and a lower-triangular matrix, can be performed via the Gram-Schmidt procedure or Givens rotations \cite{Golub2012}.} decomposition of $\boldsymbol{A}_1^\mathrm{H},$ where $\check{\boldsymbol{\mathcal{Q}}}_1 \in \mathbb{C}^{q_1\times q_1}$ is a unitary matrix and $\check{\boldsymbol{R}}_1 \in \mathbb{C}^{p\times p}$ is an upper-triangular matrix with real-valued entries on the main diagonal. Let $\boldsymbol{V}_1$ contain the last $p$ columns of $\check{\boldsymbol{\mathcal{Q}}}_1$. Hence, $\boldsymbol{A}_1 \boldsymbol{V}_1 = \check{\boldsymbol{R}}_1.$ Furthermore, choose $\boldsymbol{U}=\boldsymbol{I}_p.$ Otherwise, if $q_1 < p,$ let $\boldsymbol{A}_1 = \check{\boldsymbol{\mathcal{Q}}}_1\begin{bmatrix}
	\check{\boldsymbol{R}}_1 \\ \boldsymbol{0}
	\end{bmatrix}$ be the QR decomposition of matrix $\boldsymbol{A}_1,$ where $\check{\boldsymbol{\mathcal{Q}}}_1 \in \mathbb{C}^{p\times p}$ is a unitary matrix and $\check{\boldsymbol{R}}_1  \in \mathbb{C}^{q_1\times q_1}$ is an upper-triangular matrix with real-valued entries on the main diagonal. Choose $\boldsymbol{U} = \check{\boldsymbol{\mathcal{Q}}}_1^\mathrm{H}$ and $\boldsymbol{V}_1 = \boldsymbol{I}_{q_1}.$
	
	Next, let via QL decomposition, $(\boldsymbol{U} \boldsymbol{A}_2)^\mathrm{H} = \check{\boldsymbol{\mathcal{Q}}}_2\begin{bmatrix}\boldsymbol{0} & \boldsymbol{0} \\ \check{\boldsymbol{X}}^\mathrm{H} & \check{\boldsymbol{R}}_2^\mathrm{H}\end{bmatrix},$ where $\check{\boldsymbol{\mathcal{Q}}}_2 \in \mathbb{C}^{q_2\times q_2}$ is a unitary matrix, $\check{\boldsymbol{R}}_2 \in \mathbb{C}^{\mathrm{min}\left\{p,q_2\right\}\times \mathrm{min}\left\{p,q_2\right\}}$ is an upper-triangular matrix with real-valued entries along the main diagonal, and $\check{\boldsymbol{X}} \in \mathbb{C}^{(p-\mathrm{min}\left\{p,q_2\right\})\times \mathrm{min}\left\{p,q_2\right\}}$ is a full matrix. Let $\boldsymbol{V}_2$ contain the last $\mathrm{min}\left\{p,q_2\right\}$ columns of $\check{\boldsymbol{\mathcal{Q}}}_2.$	Then, we have $\boldsymbol{U} \boldsymbol{A}_2 \boldsymbol{V}_2 = \begin{bmatrix} \check{\boldsymbol{X}}\\\check{\boldsymbol{R}}_2\end{bmatrix}.$ Using $\boldsymbol{U}, \boldsymbol{V}_1,$ and $\boldsymbol{V}_2$ from above, we obtain (\ref{eqn:stu1}) and (\ref{eqn:stu2}). \qed
	
	\subsection{Proof of Theorem \ref{th:std}}
	\label{app:std}
	Let  $\bar{\boldsymbol{H}}_1 \in \mathbb{C}^{N\times \bar{M}_1}$ and $\bar{\boldsymbol{H}}_2 \in \mathbb{C}^{N\times \bar{M}_2}$ be matrices that contain a basis for the null space of $\boldsymbol{H}_1$ and $\boldsymbol{H}_2,$ respectively. Let $\boldsymbol{K} \in \mathbb{C}^{N\times M}$ denote the matrix containing a basis for the null space of $\begin{bmatrix} \bar{\boldsymbol{H}}_1^\mathrm{H} & \bar{\boldsymbol{H}}_2^\mathrm{H}\end{bmatrix}.$ When the null spaces of $\boldsymbol{H}_1$ and $\boldsymbol{H}_2$ are trivial, i.e., when $M_1, M_2 \geq N,$ then $\boldsymbol{K} = \boldsymbol{I}_N.$ Let, by QR decomposition,
	\begin{align}
	\hat{\boldsymbol{\mathcal{Q}}}_1 \boldsymbol{R}_1^{\mathrm{D}} &= \boldsymbol{H}_1 \begin{bmatrix}\boldsymbol{K} & \bar{\boldsymbol{H}}_2\end{bmatrix}, \label{eqn:qr1}\\
	\hat{\boldsymbol{\mathcal{Q}}}_2 \boldsymbol{R}_2^{\mathrm{D}} &= \boldsymbol{H}_2 \begin{bmatrix}\boldsymbol{K} & \bar{\boldsymbol{H}}_1\end{bmatrix}. \label{eqn:qr2}
	\end{align}
	Then, (\ref{eqn:std1}) and (\ref{eqn:std2}) are satisfied by setting
	\begin{align}
	\boldsymbol{X}^{\mathrm{D}} = \begin{bmatrix}\boldsymbol{K} & \bar{\boldsymbol{H}}_2 & \bar{\boldsymbol{H}}_1\end{bmatrix}, \label{eqn:x}
	\end{align}
	and choosing $\boldsymbol{Q}_1^{\mathrm{D}} = \hat{\boldsymbol{\mathcal{Q}}}_1^\mathrm{H}$ and $\boldsymbol{Q}_2^{\mathrm{D}} = \hat{\boldsymbol{\mathcal{Q}}}_2^\mathrm{H}$ from (\ref{eqn:qr1}) and (\ref{eqn:qr2}) above, to obtain
	\begin{align}
	\scalebox{0.9}{\mbox{\ensuremath{\displaystyle \boldsymbol{Q}_1^{\mathrm{D}}\boldsymbol{H}_1\boldsymbol{X}^{\mathrm{D}} = \begin{bmatrix}\underbrace{\hat{\boldsymbol{\mathcal{Q}}}_1^\mathrm{H} \boldsymbol{H}_1 \begin{bmatrix}\boldsymbol{K} & \bar{\boldsymbol{H}}_2\end{bmatrix}}_{\boldsymbol{R}_1^{\mathrm{D}}} &  \underbrace{\hat{\boldsymbol{\mathcal{Q}}}_1^\mathrm{H} \boldsymbol{H}_1 \bar{\boldsymbol{H}}_1}_{\boldsymbol{0}}\end{bmatrix}, \quad
	\boldsymbol{Q}_2^{\mathrm{D}}\boldsymbol{H}_2\boldsymbol{X}^{\mathrm{D}} \overset{(a)}{=} \begin{bmatrix} \underbrace{\hat{\boldsymbol{\mathcal{Q}}}_2^\mathrm{H} \boldsymbol{H}_2 \boldsymbol{K}}_{\boldsymbol{R}_2'^{\mathrm{D}}} & \underbrace{\hat{\boldsymbol{\mathcal{Q}}}_2^\mathrm{H} \boldsymbol{H}_2 \bar{\boldsymbol{H}}_2}_{\boldsymbol{0}} &  \underbrace{\hat{\boldsymbol{\mathcal{Q}}}_2^\mathrm{H} \boldsymbol{H}_2 \bar{\boldsymbol{H}}_1}_{\boldsymbol{R}_2''^{\mathrm{D}}}\end{bmatrix}}}},
	\end{align}
	where (a) holds because the QR decomposition in (\ref{eqn:qr2}) is unaffected by the zero columns introduced in the middle. \qed
	
	\subsection{Proof of Theorem \ref{th:trans}}
	\label{app:trans}
	By substituting the expressions in (\ref{eqn:p1t}) and (\ref{eqn:p2t}) into (\ref{eqn:bc}), we obtain
	\begin{align}
	&\scalebox{0.85}{\mbox{\ensuremath{\displaystyle \sum_{l=1}^{M}\left[\eta \log_2\left(1 + \frac{1}{\Pi_1 \sigma^2} \left|\varrho^{(1)}_{l,l}\right|^2 p_{1,l}'^{\mathrm{D}}\right) + (1-\eta) \log_2\left(1 + \frac{\frac{1}{\Pi_2} \left|\varrho^{(2)}_{l,l}\right|^2 p_{2,l}'^{\mathrm{D}}}{\sigma^2 + \frac{1}{\Pi_1} \left|\varrho^{(1)}_{l,l}\right|^2 p_{1,l}'^{\mathrm{D}}}\right)\right]}}} \nonumber\\
	&\quad \scalebox{0.85}{\mbox{\ensuremath{\displaystyle {}+  \eta\sum_{l=M+1}^{M+\bar{M}_1}\log_2\left(1 + \frac{1}{\Pi_1\sigma^2}p_{1,l}'^{\mathrm{D}}\left|\varrho^{(1)}_{l,l}\right|^2\right) + (1-\eta) \sum_{l=M+\bar{M}_1+1}^{M+\bar{M}_1+\bar{M}_2} \log_2\left(1 + \frac{1}{\Pi_2\sigma^2}p_{2,l}'^{\mathrm{D}}\left|\varrho^{(2)}_{l-\bar{M}_1,l-\bar{M}_1}\right|^2\right)}}},
	\end{align}
	which is the weighted sum rate of $L$ parallel SISO-MAC channels under a sum power constraint and can be simplified to (\ref{eqn:mac1}) and (\ref{eqn:mac2}) based on \cite[Theorem 1]{Liu2008}.
	
	Next, for the sum power constraint in (\ref{eqn:pe}), from (\ref{eqn:p1t}) and (\ref{eqn:p2t}), we have
	\begin{align}
	\scalebox{0.85}{\mbox{\ensuremath{\displaystyle \sum_{l=1}^{M} (p_{1,l}^{\mathrm{D}} + p_{2,l}^{\mathrm{D}})}}} &= \scalebox{0.95}{\mbox{\ensuremath{\displaystyle \sum_{l=1}^{M} \Big[\underbrace{\frac{p_{1,l}'^{\mathrm{D}}}{\sigma^2} (\sigma^2 + \frac{1}{\Pi_1}p_{2,l}^{\mathrm{D}}\left|\varrho^{(1)}_{l,l}\right|^2) + \frac{p_{2,l}'^{\mathrm{D}}\sigma^2}{\sigma^2 + \frac{1}{\Pi_1}p_{1,l}'^{\mathrm{D}}\left|\varrho^{(1)}_{l,l}\right|^2 }}_{\overset{(a)}{=} p_{1,l}'^{\mathrm{D}}+p_{2,l}'^{\mathrm{D}}} {} + p_{1,l}'^{\mathrm{D}}\frac{1}{\Pi_1\sigma^2}\sum_{l'=l+1}^{M}p_{2,l'}^{\mathrm{D}}\left|\varrho^{(1)}_{l,l'}\right|^2\Big]}}} \nonumber\\
	&= \scalebox{0.85}{\mbox{\ensuremath{\displaystyle \sum_{l=1}^{M} \Big[p_{1,l}'^{\mathrm{D}}+p_{2,l}'^{\mathrm{D}} + p_{1,l}'^{\mathrm{D}}\frac{1}{\Pi_1}\sum_{l'=l+1}^{M}\frac{p_{2,l'}'^{\mathrm{D}}}{\sigma^2 + \frac{1}{\Pi_1}p_{1,l'}'^{\mathrm{D}}\left|\varrho^{(1)}_{l',l'}\right|^2 }\left|\varrho^{(1)}_{l,l'}\right|^2\Big]}}} \nonumber\\
	&= \scalebox{0.85}{\mbox{\ensuremath{\displaystyle \sum_{l=1}^{M} (p_{1,l}'^{\mathrm{D}} + p_{2,l}'^{\mathrm{D}}) + \sum_{l'=2}^{M}\frac{p_{2,l'}'^{\mathrm{D}}}{\sigma^2 + \frac{1}{\Pi_1}p_{1,l'}'^{\mathrm{D}}\left|\varrho^{(1)}_{l',l'}\right|^2 }\sum_{l=1}^{l'-1}\frac{1}{\Pi_1} p_{1,l}'^{\mathrm{D}} \left|\varrho^{(1)}_{l,l'}\right|^2}}}.
	\end{align}
	where (a) is obtained by substituting $p_{2,l}^{\mathrm{D}}$ from (\ref{eqn:p2t}) and simplifying the resulting expression. Lastly, based on (\ref{eqn:p1t}) and (\ref{eqn:p2t}), as $p_{k,l}'^{\mathrm{D}} = p_{k,l}^{\mathrm{D}},k=M+1,\dots,L,$ (\ref{eqn:pe}) follows. \qed
	\end{appendices}
	
	\bibliographystyle{IEEEtran}
	\bibliography{references}
\end{document}